\newcommand{\longversion}[1]{#1}
\newcommand{\shortversion}[1]{}
\author{{\bf Palash Dey}\\Indian Institute of Science, Bangalore\\palash@csa.iisc.ernet.in
\and
{\bf Neeldhara Misra}\\Indian Institute of Technology, Gandhinagar\\mail@neeldhara.com
}}
\date{}
\author{{\bf Palash Dey}\\Indian Institute of Science, Bangalore\\palash@csa.iisc.ernet.in
\and
{\bf Neeldhara Misra}\\Indian Institute of Technology, Gandhinagar\\mail@neeldhara.com
}
\newenvironment{proof}{\noindent{\em Proof:}}{ \hfill $\square$\\ }
\algrenewcommand\algorithmicrequire{\textbf{Input:}}
\algrenewcommand\algorithmicensure{\textbf{Output:}}
\algnewcommand{\Initialize}[1]{%
  \State \textbf{Initialize:}
  \Statex \hspace*{\algorithmicindent}\parbox[t]{.8\linewidth}{\raggedright #1}
}
\newcommand{\el}{\ensuremath{\ell}\xspace}
\newcommand{\suc}{\ensuremath{\succ} \xspace}
\renewcommand{\ge}{\geqslant}
\renewcommand{\le}{\leqslant}
\titlespacing\section{0pt}{2pt plus 1pt minus 1pt}{2pt plus 1pt minus 1pt}
\titlespacing\subsection{0pt}{2pt plus 1pt minus 1pt}{2pt plus 1pt minus 1pt}
\titlespacing\paragraph{0pt}{2pt plus 1pt minus 1pt}{2pt plus 1pt minus 1pt}
\newcommand{\zerodisplayskips}{%
  \setlength{\abovedisplayskip}{1pt}
  \setlength{\belowdisplayskip}{1pt}
  \setlength{\abovedisplayshortskip}{1pt}
  \setlength{\belowdisplayshortskip}{1pt}}
\appto{\normalsize}{\zerodisplayskips}
\appto{\small}{\zerodisplayskips}
\appto{\footnotesize}{\zerodisplayskips}
\setlist{topsep=0.5ex,itemsep=-1ex,partopsep=0.5ex,parsep=1ex,leftmargin=0pt,itemindent=*}
\newcommand{\Query}{\textsc{Query}}
\DeclareMathOperator*{\argmin}{arg\!min}
\newcommand{\BigO}{\ensuremath{\mathcal{O}}\xspace}
\newcommand{\true}{\textsc{true}\xspace}
\newcommand{\false}{\textsc{false}\xspace}
\newcommand{\pr}{\ensuremath{\prime}\xspace}
\newcommand{\PE}{\textsc{Preference Elicitation}\xspace}
\newcommand{\CW}{\textsc{Weak Condorcet Winner}\xspace}
\renewcommand{\AA}{\ensuremath{\mathcal A}\xspace}
\newcommand{\CC}{\ensuremath{\mathcal C}\xspace}
\newcommand{\EE}{\ensuremath{\mathcal E}\xspace}
\newcommand{\LL}{\ensuremath{\mathcal L}\xspace}
\newcommand{\OO}{\ensuremath{\mathcal O}\xspace}
\newcommand{\PP}{\ensuremath{\mathcal P}\xspace}
\newcommand{\QQ}{\ensuremath{\mathcal Q}\xspace}
\newcommand{\RR}{\ensuremath{\mathcal R}\xspace}
\renewcommand{\SS}{\ensuremath{\mathcal S}\xspace}
\newcommand{\TT}{\ensuremath{\mathcal T}\xspace}
\newcommand{\VV}{\ensuremath{\mathcal V}\xspace}
\newcommand{\XX}{\ensuremath{\mathcal X}\xspace}
\newcommand{\YY}{\ensuremath{\mathcal Y}\xspace}
\newcommand{\nfrac}{\nicefrac}
\newtheorem{observation}{\bf Observation}
\newtheorem{theorem}{\bf Theorem}
\newtheorem{lemma}{\bf Lemma}
\newtheorem{corollary}{\bf Corollary}
\newtheorem{definition}{\bf Definition}
\newcommand{\eps}{\varepsilon}
\renewcommand{\epsilon}{\eps}
\newcommand{\ignore}[1]{}
\crefname{theorem}{Theorem}{\bf Theorem}
\crefname{observation}{Observation}{\bf Observation}
\crefname{lemma}{Lemma}{\bf Lemma}
\crefname{corollary}{Corollary}{\bf Corollary}
\crefname{proposition}{Proposition}{\bf Proposition}
\crefname{definition}{Definition}{\bf Definition}
\crefname{claim}{Claim}{\bf Claim}
\title{{\bf Elicitation for Preferences Single Peaked on Trees}}
\begin{document}

\maketitle

\begin{abstract}

\longversion{In multiagent systems, we often have a set of agents each of which have a preference ordering over a set of items and one would like to know these preference orderings for various tasks, for example, data analysis, preference aggregation, voting etc. However, we often have a large number of items which makes it impractical to ask the agents for their complete preference ordering. In such scenarios, we usually elicit these agents' preferences by asking (a hopefully small number of) comparison queries --- asking an agent to compare two items.} 
\shortversion{
Eliciting preferences of a set of agents over a set of items is a problem of fundamental interest in artificial intelligence in general and social choice theory in particular.}
Prior works on preference elicitation focus on unrestricted domain and the domain of single peaked preferences and show that the preferences in single peaked domain can be elicited by much less number of queries compared to unrestricted domain. We extend this line of research and study preference elicitation for single peaked preferences on trees which is a strict superset of the domain of single peaked preferences. We show that the query complexity crucially depends on the number of leaves, the path cover number, and the distance from path of the underlying single peaked tree, whereas the other natural parameters like maximum degree, diameter, pathwidth do not play any direct role in determining query complexity. We then investigate the query complexity for finding a weak Condorcet winner for preferences single peaked on a tree and show that this task has much less query complexity than preference elicitation. Here again we observe that the number of leaves in the underlying single peaked tree and the path cover number of the tree influence the query complexity of the problem.
 
\end{abstract}

\section{Introduction}

In multiagent systems, we often have scenarios where agents have to arrive at a consensus when choosing between multiple options. Typically, the agents have preferences over a set of items, and the problem of aggregating these preferences in a suitable manner is one of the most well-studied problems in social choice theory~\cite{brandt2015handbook}. There are many ways of expressing preferences over a set of alternatives. One of the most comprehensive ways is to specify a complete ranking over the set of alternatives. However, one of the downsides of this model is the fact that it can be expensive to solicit a preference when there are a large number of alternatives, and many agents are involved. 

Since asking agents to provide their complete rankings is impractical, a popular notion is one of \textit{elicitation}, where we ask agents simple \textit{comparison queries}, such as if they prefer alternative $X$ over $Y$. This naturally gives rise to the problem of \textit{preference elicitation}, where we hope to recover the complete ranking (or possibly the most relevant part of the ranking) based on a small number of queries. 

The paradigm of \textit{voting} is a popular general methodology for aggregating preferences, where one devises ``voting rules'' for mapping a collection of preferences (which we refer to as votes) to either a \textit{winning alternative} or a \textit{consensus ranking}. Keeping in line with the terminology used in voting, we will refer to alternatives as candidates, and a collection of votes will be termed a preference profile. In the context of a fixed voting rule, we may also want to query the voters up to the point of determining the winner (or the aggregate ranking, as the case may be). Yet another refinement in this setting is when we have prior information about how agents are likely to vote, and we may want to determine which voters to query first, to be able to quickly rule out a large number of alternatives, as explored by~\cite{conitzer2002vote}. 

When our goal is to elicit preferences that have no prior structure, one can demonstrate scenarios where it is imperative to ask each agent (almost) as many queries as would be required to determine an arbitrary ranking. However, in recent times, there has been considerable interest in voting profiles that are endowed with additional structure. The motivation for this is two-fold. The first is that in several application scenarios commonly considered, it is rare that votes are ad-hoc, demonstrating no patterns whatsoever. For example, the notion of \textit{single-peaked preferences}, which we will soon discuss at length, forms the basis of several studies in the analytical political sciences~\cite{hinich1997analytical}. In his work on eliciting preferences that demonstrate the ``single-peaked'' structure,~\cite{Conitzer09} argues that the notion of single-peakedness is also a reasonable restriction in applications outside of the domain of political elections. 

The second motivation for studying restricted preferences is somewhat more technical, but is just as compelling. To understand why structured preferences have received considerable attention from social choice theorists, we must first take a brief detour into some of the foundational ideas that have shaped the landscape of voting rules as we understand them today. As it turns out, the axiomatic approach of social choice involves defining certain ``properties'' that formally capture the quality of a voting rule. For example, we would not want a voting rule to be, informally speaking, a \textit{dictatorship}, which would essentially mean that it discards all but one voter's input. Unfortunately, a series of cornerstone results establish that it is impossible to devise voting rules which respect some of the simplest desirable properties. Indeed, the classic works of Arrow \cite{arrow1950difficulty} and Gibbard-Satterthwaite \cite{gibbard1973manipulation,satterthwaite1975strategy} show that there is no straight forward way to simultaneously deal with properties like voting paradoxes, strategy-proofness, nondictatorship, unanimity etc. We refer to \cite{moulin1991axioms} for a more elaborate discussion. Making the matter worse, many classical voting rules turn out to be computationally intractable. 

This brings us to the second reason for why structured preferences are an important consideration. The notion of single-peakedness that we mentioned earlier is an excellent illustration (we refer the reader to~\Cref{sec:prelim} for the formal definition). Introduced by~\cite{black1948rationale}, it not only captures the essence of structure in political elections, but also turns out to be extremely conducive to many natural theoretical considerations. To begin with, one can devise voting rules that are ``nice'' with respect to several properties, when preferences are single-peaked. Further, they are structurally elegant from the view of winner determination, since they always admit a {\em weak Condorcet winner} --- a candidate which is not defeated by any other candidate in pairwise election --- thus working around the Condorcet paradox which is otherwise a prominent concern in the general scenario. In a landmark contribution, \cite{brandt2015bypassing} show that several computational problems that are intractable in the general setting become polynomially solvable when we consider single-peaked preferences. 

A natural question at this point is if the problem of elicitation becomes any easier --- that is, if we can get away with fewer queries --- by taking advantage of the structure provided by single-peakedness. It turns out that the answer to this is in the affirmative, as shown in a detailed study by~\cite{Conitzer09}. The definition of single-peakedness involves an ordering over the candidates (called the \textit{harmonious ordering} by some authors). The work of~\cite{Conitzer09} shows that $\BigO(mn)$ queries suffice, assuming either that the harmonious ordering is given, or one of the votes is known. Dey and Misra~\cite{deycross} show a query complexity bound of $\BigO(mn)$ for the domain of single crossing profiles and a large number of voters.

We now return to the theme of structural restrictions on preferences. As it turns out, the single peaked preference domain has subsequently been generalized to single peakedness on trees (roughly speaking, these are profiles that are single peaked on every path) \cite{demange1982single,trick1989recognizing}. This is a class that continues to exhibit many desirable properties of single peaked domains. For example, there always exists a weak Condorcet winner and further, many voting rules that are intractable in unrestricted domain are polynomial time computable if the underlying single peaked tree is ``nice'' \cite{yu2013multiwinner,peters2016preferences}. We note the class of profiles that are single peaked on trees are substantially more general than the class of single peaked preferences. Note that the latter is a special case since a path is, in particular, a tree. Our work here addresses the issue of elicitation on profiles that are single-peaked on trees, and can be seen as a significant generalization of the results in~\cite{Conitzer09}. We now detail the specifics of our contributions.



\begin{table*}[htbp]
 \begin{center}
 \renewcommand{\arraystretch}{1.1}
  \begin{tabular}{|c|c|c|}\hline
   Parameter & Upper Bound & Lower Bound \\\hline\hline
   
   Path width ($w$) & $\BigO(mn\log m)$ \Cref{obs:trivial} & $\Omega(mn\log m)$ even for $w = 1, \log m$ \Cref{cor:pwidth_elicit_lb} \\\hline
   
   Maximum degree ($\Delta$) & $\BigO(mn\log m)$ \Cref{obs:trivial} & $\Omega(mn\log m)$ even for $\Delta = 3, m-1$ \Cref{cor:maxdeg_elicit_lb} \\\hline
   
   Path cover number ($k$) & $\BigO(mn\log k)$ \Cref{thm:pcover_elicit_ub} & $\Omega(mn\log k)$ \Cref{cor:pcover_elicit_lb} \\\hline
   
   Number of leaves (\el) & $\BigO(mn\log \el)$ \Cref{cor:leaves_elicit_ub} & $\Omega(mn\log \el)$ \Cref{thm:leaves_elicit_lb} \\\hline
   
   Distance from path ($d$) & $\BigO(mn + nd\log d)$ \Cref{thm:pdist_elicit_ub} & $\Omega(mn + nd\log d)$ \Cref{thm:pdist_elicit_lb} \\\hline
   
   Diameter ($\omega$) & $\BigO(mn\log m)$ \Cref{obs:trivial} & $\Omega(mn\log m)$ even for $\omega = 2, \nfrac{m}{2}$ \Cref{cor:dia_elicit_lb} \\\hline
  \end{tabular}
 \end{center}
\caption{Summary of query complexity bounds for \PE. }\label{tbl:elicit}
\end{table*}

\longversion{\subsection{Motivation}}

\longversion{\subsection{Our Contributions}}
\shortversion{\paragraph*{Our Contributions.}} We study the query complexity for preference elicitation when the preference profile is single peaked on a tree. We provide {\em tight} connections between various parameters of the underlying tree and the query complexity for preference elicitation. Our broad goal is to provide a suitable generalization of preference elicitation for single peaked profiles to profiles that are single peaked on trees. Therefore, we consider various ways of quantifying the ``closeness'' of a tree to a path, and reflect on how these measures might factor into the query complexity of an algorithm that is actively exploiting the underlying tree structure.  

We summarize our results for preference elicitation in \Cref{tbl:elicit}, where the readers will note that most of the parameters (except diameter) chosen are small constants (typically zero, one or two) when the tree under consideration is a path. Observe that in some cases --- such as the number of leaves, or the path cover number --- the dependence on the parameter is transparent (and we recover the results of \cite{Conitzer09} as a special case), while in other cases, it is clear that the perspective provides no additional mileage (the non-trivial results here are the matching lower bounds).

In terms of technique, our strategy is to ``scoop out the paths from the tree'' and use known algorithms to efficiently elicit the preference on the parts of the trees that are paths. We then efficiently merge this information across the board, and that aspect of the algorithm varies depending on the parameter considered. The lower bounds typically come from trees that provide large ``degrees of freedom'' in reordering candidates, typically these are trees that don't have too many long paths (such as stars). The arguments are often subtle but intuitive. 

We then study the query complexity for finding a weak Condorcet winner of a preference profile which is single peaked on a tree. Here, we are able to show that a weak Condorcet winner can be found with far fewer queries than the corresponding elicitation problem. In particular, we establish that a weak Condorcet winner can be found using $\BigO(mn)$ many queries for profiles that are single peaked on trees [\Cref{{thm:cw_gen_ub}}], and we also show that this bound is the best that we can hope for [\Cref{thm:cw_gen_lb}]. We also consider the problem for the special case of single peaked profiles. While~\cite{Conitzer09} showed that $\Omega(mn)$ queries are necessary to determine the \textit{aggregate ranking}, we show that only $\BigO(n\log m)$ queries suffice if we are just interested in (one of the) weak Condorcet winners. Moreover, we show this bound is tight under the condition that the algorithm does not interleave queries to different voters [\Cref{thm:con_sp_lb}] (our algorithm indeed satisfies this condition). 

Finally, expressing the query complexity for determining a weak Condorcet winner in terms of a measure of closeness to a path, we show an algorithm with query complexity $\BigO(nk\log m)$ where $k$ is the path cover number of \TT [\Cref{thm:cw_pc_ub}] or the number of leaves in \TT [\Cref{cor:cw_leaf_ub}].
 \longversion{We now elaborate further on our specific contributions for preference elicitation.

\begin{itemize}
 \item We design novel algorithms for preference elicitation for profiles which are single peaked on a tree with \el leaves with query complexity $\BigO(mn\log \el)$ [\Cref{cor:leaves_elicit_ub}]. Moreover, we prove that there exists a tree \TT with \el leaves such that any preference elicitation algorithm for profiles which are single peaked on tree \TT has query complexity $\Omega(mn\log\el)$ [\Cref{thm:leaves_elicit_lb}]. We show similar results for the parameter path cover number of the underlying tree [\Cref{thm:pcover_elicit_ub,cor:pcover_elicit_lb}]. We provide a preference elicitation algorithm with query complexity $\BigO(mn + nd\log d)$ for single peaked profiles on trees which can be made into a path by deleting at most $d$ nodes [\Cref{thm:pdist_elicit_ub}]. We show that our query complexity upper bound is tight up to constant factors [\Cref{thm:pdist_elicit_lb}]. These results show that the query complexity tightly depends on the number of leaves, the path cover number, and the distance from path of the underlying tree.
 
 \item We then show that there exists a tree \TT with pathwidth one or $\log m$ [\Cref{cor:pwidth_elicit_lb}] or maximum degree is $3$ or $m-1$ [\Cref{cor:maxdeg_elicit_lb}] or  diameter is $2$ or $\nfrac{m}{2}$ [\Cref{cor:dia_elicit_lb}] such that any preference elicitation algorithm for single peaked profiles on the tree \TT has query complexity $\Omega(mn\log m)$. These results show that the query complexity of preference elicitation does not directly depend on the parameters above.
\end{itemize}

We next study query complexity for finding a weak Condorcet winner for profiles which are single peaked on trees and we have the following results.

\begin{itemize}
 \item We show that a weak Condorcet winner can be found using $\BigO(mn)$ many queries for profiles that are single peaked on trees [\Cref{{thm:cw_gen_ub}}] which is better than the query complexity for preference elicitation. Moreover, we prove that this bound is tight in the sense that any algorithm for finding a weak Condorcet winner for profiles that are single peaked on stars has query complexity $\Omega(mn)$ [\Cref{thm:cw_gen_lb}].
 
 \item On the other hand, we can find a weak Condorcet winner using only $\BigO(n\log m)$ many queries for single peaked profiles [\Cref{thm:con_sp_ub}]. Moreover, we show this bound is tight under the condition that the algorithm does not interleave queries to different voters [\Cref{thm:con_sp_lb}] (our algorithm indeed satisfies this condition). For any arbitrary underlying single peaked tree \TT, we provide an algorithm for finding a weak Condorcet winner with query complexity $\BigO(nk\log m)$ where $k$ is the path cover number of \TT [\Cref{thm:cw_pc_ub}] or the number of leaves in \TT [\Cref{cor:cw_leaf_ub}].
\end{itemize}}

To summarize, we remark that our results non-trivially generalize earlier works on query complexity for preference elicitation in \cite{Conitzer09}. We believe revisiting the preference elicitation problem in the context of profiles that are single peaked on trees is timely, and that this work also provides fresh algorithmic and structural insights on the domain of preferences that are single peaked on trees.

\longversion{\subsection{Related Work}}
\shortversion{\paragraph*{Related Work.}} We have already mentioned the work of~\cite{Conitzer09} addressing the question of eliciting preferences in single-peaked profiles, which is the closest predecessor to our work. Before this, Conitzer and Sandholm addressed the computational hardness for querying minimally for winner determination \cite{conitzer2002vote}. They also prove that one would need to make $\Omega(mn\log m)$ queries even to decide the winner for many commonly used voting rules \cite{conitzer2005communication} which matches with the trivial $\BigO(mn\log m)$ upper bound for preference elicitation in unrestricted domain based on sorting. Ding and Lin study preference elicitation under partial information setting and show interesting properties of what they call a deciding set of queries \cite{ding2013voting}. Lu and Boutilier provide empirical study of preference elicitation under probabilistic preference model \cite{lu2011vote} and devise several novel heuristics which often work well in practice \cite{LuB11a}.


%
%
%
%
%
%
%
\longversion{
The rest of the paper is organized as follows: We introduce the basic terminologies and formally define the problems in \Cref{sec:prelim}, present results for eliciting preference profile and finding a weak Condorcet winner in \Cref{sec:pe} and \Cref{sec:cw} respectively, and conclude with future works in \Cref{sec:com}.
}
\section{Preliminaries}\label{sec:prelim}

For a positive integer $n$, we denote the set $\{1, \ldots, n\}$ by $[n]$. Suppose we have a set $\VV=\{v_i: i\in[n]\}$ of $n$ {\em voters} each of which has a preference $\succ_i, i\in[n]$, alternatively called {\em vote}, which is a complete order over a set $\CC=\{c_j: j\in[m]\}$ of $m$ {\em candidates}. We denote the set of preferences over \CC by $\LL(\CC)$. The tuple $(\succ_i)_{i\in[n]}$ of the preferences of all the voters is called a {\em profile}. If not mentioned otherwise, we use $m$ and $n$ to denote the number of candidates and the number of voters respectively. Let $\succ\thinspace\in\LL(\CC)$ be any complete order over \CC. For a subset $\XX\subseteq\CC$ of candidates, we denote the restriction of an order $\succ$ to the subset of candidates $\XX$ by $\succ(\XX)$. We denote the restriction of a profile $\PP = (\succ_i)_{i\in[n]}$ to \XX by $\PP(\XX) = (\succ_i(\XX))_{i\in[n]}$. We say that a candidate $x\in\CC$ is placed at the $i^{th}$ position of a preference $\succ\thinspace\in\LL(\CC)$ if $x$ is preferred over all but exactly $(i-1)$ candidates in $\succ$. Given an $n$ voters profile $\PP = (\succ_i)_{i\in[n]}$, a candidate $x$ is called a {\em Condorcet winner} of \PP if for every other candidate $y$, a {\em strict} majority of the voters prefer $x$ over $y$; that is $|\{i\in[n]: x\succ_i y\}| > \nfrac{n}{2}$ for every $y\in\CC\setminus\{x\}$. A candidate $x$ is called a {\em weak Condorcet winner} of an $n$ voters profile $\PP = (\succ_i)_{i\in[n]}$ if there does not exist any other candidate $y$ whom a strict majority of the voters prefer over $x$; that is $|\{i\in[n]: y\succ_i x\}| \le \nfrac{n}{2}$ for every $y\in\CC\setminus\{x\}$.
\longversion{For an order $\succ\thinspace\in\LL(\CC)$, we denote the ``reverse order'' of $\succ$ by $\overleftarrow{\succ}$; that is $\overleftarrow{\succ} = \{x\overleftarrow{\succ}y : x, y\in\CC \text{ and } y\succ x\}$.}
A preference $\succ\thinspace\in\LL(\CC)$ over a set of candidates \CC is called {\em single peaked} with respect to an order $\succ^\pr\in\LL(\CC)$ if, for every candidates $x, y\in\CC$, we have $x\succ y$ whenever we have either $c\succ^\pr x\succ^\pr y$ or $y\succ^\pr x\succ^\pr c$, where $c\in\CC$ is the candidate at the first position of $\succ$. A profile $\PP=(\succ_i)_{i\in[n]}$ is called single peaked with respect to an order $\succ^\pr\in\LL(\CC)$ if $\succ_i$ is single peaked with respect to $\succ^\pr$ for every $i\in[n]$. \longversion{ Notice that if a profile \PP is single peaked with respect to an order $\succ^\pr\in\LL(\CC)$, then \PP is also single peaked with respect to the order $\overleftarrow{\succ}^\pr$.} Given a path $\QQ = (x_1, x_2, \ldots, x_\el)$ from a vertex $x_1$ to another vertex $x_\el$ in a tree \TT, we define the order induced by the path \QQ to be $x_1\succ x_2\succ \cdots\succ x_\el$. Given a tree $\TT = (\CC, \EE)$ with the set of nodes as the set of candidates \CC, a profile \PP is called single peaked on the tree \TT if \PP is single peaked on the order induced by every path of the tree \TT; that is for every two candidates $x, y\in\CC$, the profile $\PP(\XX)$ is single peaked with respect to the order $\succ$ of the candidates \XX induced by the unique path from $x$ to $y$ in \TT. We call the tree \TT the {\em underlying single peaked tree}. It is known (c.f.~\cite{demange1982single}) that there always exists a weakly Condorcet winner for a profile \PP which is single peaked on a tree \TT.

\paragraph{Trees.} The following definitions pertaining to the structural aspects of trees will be useful.
\begin{itemize}
	\item The {\em pathwidth} of \TT is the minimum width of a {\em path decomposition} of \TT \cite{heinrich1993path}. 
	\item A set of disjoint paths $\QQ = \{Q_1 = (\XX_1, \EE_1), \ldots, Q_k = (\XX_k, \EE_k)\}$ is said to cover a tree $\TT = (\XX, \EE)$ if $\XX = \cup_{i\in[k]} \XX_i, \EE_i\subseteq \EE, \XX_i \cap \XX_j = \emptyset, \EE_i \cap \EE_j = \emptyset$ for every $i, j\in[k]$ with $i\ne j$. The {\em path cover number} of \TT is the cardinality of the smallest set \QQ of {\em disjoint} paths that cover \TT.
	\item The {\em distance of a tree \TT from a path} is the smallest number of nodes whose removal makes the tree a path.
	\item The {\em diameter} of a tree \TT is the number of edges in the longest path in \TT.
\end{itemize}

We also list some definitions of \textit{subclasses} of trees (which are special types of trees, see also~\Cref{fig:tree-classes}). 
\begin{itemize}
\item A tree is a {\em star} if there is a {\em center vertex} and every other vertex is a neighbor of this vertex.
\item A tree is a {\em subdivision of a star} if it can be constructed by replacing each edge of a star by a path. 
\item A subdivision of a star is called {\em balanced} if there exists an integer \el such that the distance of every leaf node from the center is \el. 
\item A tree is a {\em caterpillar} if there is a {\em central path} and every other vertex is at a distance of one it. 
\item A tree is a {\em complete binary tree} rooted at $r$ if every nonleaf node has exactly two children and there exists an integer $h$, called the height of the tree, such that every leaf node is at a distance of either $h$ or $h-1$ from the root node $r$.	
\end{itemize}

\begin{figure}[t]
	\centering
	\includegraphics[scale=0.35]{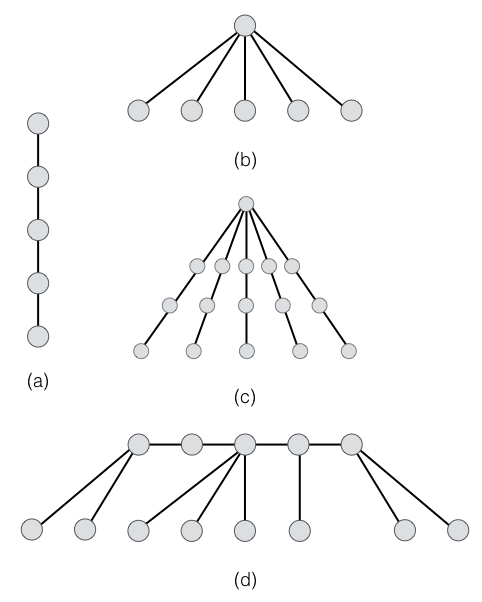}
	\caption{Depicting classes of trees: (a) a path, (b) a star, (c) a (balanced) subdivision of a star, (d) a caterpillar.}
	\label{fig:tree-classes}
\end{figure}



\paragraph*{Problem Definitions and Known Results.}~Suppose we have a profile \PP with $n$ voters and $m$ candidates. For any pair of distinct candidates $x$ and $y$, and a voter $\el \in [n]$, we introduce the boolean-valued function $\text{\Query}(x \succ_\el y)$. The output of this function is \true if the voter \el prefers the candidate $x$ over the candidate $y$ and \false otherwise. We now formally state the two problems that we consider in this paper.

\begin{definition}\PE\\
 Given a tree $\TT = (\CC, \EE)$ and an oracle access to the function \Query($\cdot$) for a profile \PP which is single peaked on \TT, find \PP.
\end{definition}

\begin{definition}\CW\\
 Given a tree $\TT = (\CC, \EE)$ and an oracle access to the function \Query($\cdot$) for a profile \PP which is single peaked on \TT, find a weak Condorcet winner of \PP.
\end{definition}

Suppose we have a set of candidates $\CC = \{c_1, \ldots, c_m\}$. We say that an algorithm \AA makes $q$ queries if there are exactly $q$ distinct tuples $(\el, c_i, c_j)\in[n]\times\CC\times\CC$ with $i<j$ such that \AA calls \Query($c_i\succ_\el c_j$) or \Query($c_j\succ_\el c_i$). We call the number of queries made by an algorithm \AA its {\em query complexity}. 

We state some known results that we will appeal to later. The first observation employs a sorting algorithm like merge sort to elicit every vote with $\BigO(m\log m)$ queries, while the second follows from the linear-time merge subroutine of merge sort (\cite{cormen2009introduction}).


\begin{observation}\label{obs:trivial}
 There is a \PE algorithm with query complexity $\BigO(mn\log m)$.
\end{observation}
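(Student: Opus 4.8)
The plan is to ignore the tree structure entirely and elicit each vote independently by sorting. Fix a voter $v_\el$. The oracle $\Query(\cdot \succ_\el \cdot)$ is precisely a comparison oracle for the total order $\succ_\el$ on \CC: a single call $\Query(x \succ_\el y)$ tells us which of $x \succ_\el y$ or $y \succ_\el x$ holds. Hence we can run any comparison-based sorting algorithm --- say merge sort --- using this oracle as the comparator, and thereby recover the full order $\succ_\el$. Merge sort makes $\BigO(m\log m)$ comparisons to sort $m$ elements, so it issues $\BigO(m\log m)$ queries to voter $v_\el$.

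Repeating this procedure for each of the $n$ voters reconstructs the entire profile \PP, for a total of $\BigO(mn\log m)$ queries. The only bookkeeping remark is that our notion of query complexity counts distinct tuples $(\el, c_i, c_j)$; this causes no trouble, since a bound on the number of comparisons is in particular a bound on the number of distinct pairs involved, and in any case one may memoize the answer to each pair $(c_i, c_j)$ for the current voter so that no pair is queried more than once.

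There is no genuine obstacle here: the statement amounts to the remark that \PE is no harder than sorting $n$ lists of length $m$, and neither single-peakedness nor the tree \TT is used --- which is exactly why we label this the \emph{trivial} bound. The substance of the paper is in improving on it when \TT is close to a path, and in establishing matching lower bounds showing when no improvement is possible.
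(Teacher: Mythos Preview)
Your proposal is correct and matches the paper's own one-line justification: elicit each vote by running merge sort using $\Query(\cdot\succ_\el\cdot)$ as the comparator, for $\BigO(m\log m)$ queries per voter and $\BigO(mn\log m)$ overall. There is nothing to add.
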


\begin{observation}\label{obs:merge}
 Suppose $\CC_1, \CC_2\subseteq\CC$ form a partition of \CC and $\succ$ is a ranking of the candidates in \CC. Then there is a polynomial time algorithm that finds $\succ$ given $\succ(\CC_1)$ and $\succ(\CC_2)$ with query complexity $\BigO(|\CC|)$.
\end{observation}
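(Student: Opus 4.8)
The plan is to adapt the standard linear-time merge subroutine of merge sort. Write the two given restrictions as $\CC_1 = \{a_1 \succ a_2 \succ \cdots \succ a_p\}$ and $\CC_2 = \{b_1 \succ b_2 \succ \cdots \succ b_q\}$, where $p = |\CC_1|$, $q = |\CC_2|$, and $p+q = |\CC|$; these orderings are read directly off the inputs $\succ(\CC_1)$ and $\succ(\CC_2)$. I would maintain two pointers $i$ and $j$, both initialized to $1$, and an initially empty output list that will accumulate candidates in $\succ$-order from the top down. As long as both pointers are in range, I query the comparison between the current front candidates $a_i$ and $b_j$ with respect to $\succ$ (a single call to $\Query$): if $a_i \succ b_j$, append $a_i$ to the output and increment $i$, and otherwise append $b_j$ and increment $j$. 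Once one pointer runs off the end of its list, append the entire remaining suffix of the other list without any further queries, since that suffix is already sorted. Output the resulting list as $\succ$.

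For correctness, a straightforward induction shows that at the start of each step the front candidates $a_i$ and $b_j$ are, respectively, the $\succ$-maxima of the not-yet-output portions of $\CC_1$ and $\CC_2$ (the base case is immediate, and the inductive step holds because we only ever remove the current maximum of one side). Since $\succ$ is a total order and $\CC_1, \CC_2$ partition $\CC$, the $\succ$-maximum of all not-yet-output candidates is whichever of $a_i, b_j$ is larger, which is exactly what the query determines; hence the list produced is $\succ$. For the query complexity, note that each query is followed by exactly one append and one pointer increment, each of the $|\CC|$ candidates is appended exactly once, and the final ``leftover'' appends incur no queries, so the total number of queries is at most $|\CC| - 1 = \BigO(|\CC|)$. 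Each step does $\BigO(1)$ additional bookkeeping, so the running time is linear, in particular polynomial, in $|\CC|$.

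There is essentially no obstacle here; the one point worth making explicit is that every query is between a candidate of $\CC_1$ and a candidate of $\CC_2$ --- we never compare two candidates lying in the same part, because their relative order is already provided by $\succ(\CC_1)$ or $\succ(\CC_2)$. This is precisely what brings the cost down to $\BigO(|\CC|)$ rather than the $\BigO(|\CC|\log|\CC|)$ of sorting from scratch, and it is the feature that makes this observation useful as a merging primitive in the later elicitation algorithms.
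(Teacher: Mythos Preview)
Your proof is correct and is exactly the approach the paper takes: the paper simply remarks that the observation ``follows from the linear-time merge subroutine of merge sort'' and cites a standard reference, and what you have written is precisely that subroutine spelled out together with the obvious correctness and query-count arguments. There is nothing to add or correct.
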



\begin{theorem}\label{thm:con}\cite{Conitzer09} There is a \PE algorithm with query complexity $\BigO(mn)$ for single peaked profiles.
\end{theorem}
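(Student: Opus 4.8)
The plan is to exploit the fact that an instance of \PE for single peaked profiles already hands us the underlying path $\TT$, and hence the ``axis'' with respect to which every vote is single peaked; the votes are independent, so it suffices to elicit each one using only $\BigO(m)$ queries. First I would fix, with no queries, a traversal of $\TT$ from one endpoint to the other, yielding an order $a_1 \succ^\pr a_2 \succ^\pr \cdots \succ^\pr a_m$ of \CC such that \PP is single peaked with respect to $\succ^\pr$.

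For a single vote $\succ$, the first step is to locate its peak, i.e.\ its top candidate, by the standard running-maximum procedure: compare $a_1$ with $a_2$, compare the winner with $a_3$, and so on, using $m-1$ queries; say the peak is $a_j$. The key observation is that knowing the peak already fixes the relative order, inside $\succ$, of the candidates on each side of $a_j$ along the axis. Indeed, for $1 \le s < t \le j-1$ we have $a_s \succ^\pr a_t \succ^\pr a_j$, so the definition of single-peakedness (with $c = a_j$) forces $a_t \succ a_s$; hence $\succ$ restricted to $\CC_1 := \{a_1,\ldots,a_{j-1}\}$ is the known order $a_{j-1} \succ a_{j-2} \succ \cdots \succ a_1$. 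Symmetrically, for $j \le s < t \le m$ we get $a_s \succ a_t$, so $\succ$ restricted to $\CC_2 := \{a_j, a_{j+1}, \ldots, a_m\}$ is the known order $a_j \succ a_{j+1} \succ \cdots \succ a_m$ (with $a_j$ on top because it is the peak). Since $\CC_1, \CC_2$ partition \CC and we now know $\succ(\CC_1)$ and $\succ(\CC_2)$, \Cref{obs:merge} recovers $\succ$ with $\BigO(m)$ additional queries. When $j = 1$ or $j = m$, one part is empty and the vote is already fully determined.

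Adding up, each vote costs $m-1$ queries to find the peak plus $\BigO(m)$ for the merge, so the whole profile is elicited with $\BigO(mn)$ queries, as claimed. The only genuinely substantive point is the structural claim above --- that identifying the peak, together with the axis, pins down both ``halves'' of the vote as explicit chains so that the merge subroutine of \Cref{obs:merge} applies --- and this follows directly from unwinding the definition of single-peakedness once the two sides of the peak are read off from $\succ^\pr$; the remaining work (running-maximum, boundary cases $j \in \{1,m\}$) is routine.
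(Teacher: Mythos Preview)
The paper does not give its own proof of this statement; it is quoted as a result of Conitzer~(2009) and used as a black box. Your argument is correct and is essentially Conitzer's: locate the peak, note that single-peakedness with respect to the known axis pins down the vote on each side of the peak as an explicit chain, and then merge the two chains via \Cref{obs:merge}. One small difference worth noting is that Conitzer locates the peak by binary search on adjacent axis pairs (comparing $a_i$ with $a_{i+1}$ reveals whether the peak is to the left or right), using $\BigO(\log m)$ queries rather than your $m-1$; this does not matter for the $\BigO(mn)$ bound here, but it is exactly the subroutine the paper later invokes in the proof of \Cref{thm:con_sp_ub}, so you may want to record that sharper peak-finding step as well.
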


\longversion{We now state the \CW problem, which asks for eliciting only up to the point of determining a weak Condercet winner (recall that at least one such winner is guaranteed to exist on profiles that are single-peaked on trees). 
\begin{definition}\CW\\
 Given a tree $\TT = (\CC, \EE)$ and an oracle access to the function \Query($\cdot$) for a profile \PP which is single peaked on \TT, find a weak Condorcet winner of \PP.
\end{definition}}


\section{Results for \PE}\label{sec:pe}
In this section, we present our results for \PE for profiles that are single peaked on trees. Recall that we would like to generalize~\Cref{thm:con} in a way to profiles that are single peaked on trees. Since the usual single peaked profiles can be viewed as profiles single peaked with respect to a path, we propose the following measures of how much a tree resembles a path.

 \begin{itemize}
 	\item \textit{Leaves.} Recall any tree has at least two leaves, and paths are the trees that have exactly two leaves. We consider the class of trees that have $\el$ leaves, and show an algorithm with a query complexity of $\BigO(mn\log \el)$. 
 	\item \textit{Path Cover.} Consider the notion of a \textit{path cover number} of a tree, which is the smallest number of disjoint paths that the tree can be partitioned into. Clearly, the path cover number of a path is one; and for trees that can be covered with $k$ paths, we show an algorithm with query complexity $\BigO(mn\log k)$.
 	\item \textit{Distance from Paths.} Let $d$ be the size of the smallest set of vertices whose removal makes the tree a path. Again, if the tree is a path, then the said set is simply the empty set. For trees that are at a distance $d$ from being a path (in the sense of vertex deletion), we provide an algorithm with query complexity $\BigO(mn\log d)$.
 	\item \textit{Pathwidth and Maximum Degree.} Finally, we note that paths are also trees that have pathwidth one, and maximum degree two. These perspectives turn out to be less useful: in particular, there are trees where these parameters are constant, for which we show that elicitation is as hard as it would be on an arbitrary profile, and therefore the easy algorithm from~\Cref{obs:trivial} is actually the best that we can hope for. 
 \end{itemize}

For the first three perspectives that we employ, that seemingly capture an appropriate aspect of paths and carry it forward to trees, the query complexities that we obtain are tight --- we have matching lower bounds in all cases. Also, while considering structural parameters, it is natural to wonder if there is a class of trees that are incomparable with paths but effective for elicitation. Our attempt in this direction is to consider trees of bounded diameter. However, again, we find that this is not useful, as we have examples to show that there exist trees of diameter two that are as hard to elicit as general profiles. 

We remark at this point that all these parameters are polynomially computable for trees, making the algorithmic results viable. Also, for the parameters of pathwidth, maximum degree and diameter, we show lower bounds on trees where these parameters are large (such as trees with pathwidth $O(\log m)$, maximum degree $m-1$, and diameter $m/2$), which --- roughly speaking --- also rules out the possibility of getting a good inverse dependence. As a concrete example, motivated by the $\BigO(mn)$ algorithm for paths, which have diameter $m$, one might wonder if there is an algorithm with query complexity $\BigO(\frac{mn \log m}{\log \omega})$. This possibility, in particular, is ruled out. 
We are now ready to discuss the results in~\Cref{tbl:elicit}. 

\longversion{
We next show a structural result about trees namely any tree with \el leaves can be partitioned into \el paths. The idea is to fix some nonleaf node as root and iteratively find a path from low depth nodes (depth of a node is its distance from root) to some leaf node which is disjoint from all the paths chosen so far. We formalize this idea below.

\begin{lemma}\label{lem:path_decom}
 Let $\TT = (\XX, \EE)$ be a tree with \el leaves. Then there is a polynomial time algorithm which partitions \TT into \el disjoint paths $\QQ_i = (\XX_i, \EE_i), i\in[\el]$; that is we have $\XX_i \cap \XX_j = \emptyset, \EE_i \cap \EE_j = \emptyset$ for every $i, j\in[\el]$ with $i\ne j$, $\XX = \cup_{i\in[\el]} \XX_i$, $\EE = \cup_{i\in[\el]} \EE_i$, and $\QQ_i$ is a path in \TT.
\end{lemma}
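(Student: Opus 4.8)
The plan is to give a greedy ``peeling'' procedure that repeatedly strips off a root-to-leaf path, and then to count the paths by a bijection argument. If $\TT$ has at most two vertices the claim is trivial (a single vertex is one path; a single edge splits into its two endpoints), so I would assume $\TT$ has a vertex of degree at least two, root $\TT$ at such a vertex $r$ (which guarantees that $r$ is not a leaf), write $\mathrm{depth}(v)$ for the distance of $v$ from $r$, and let $\TT_v$ denote the subtree hanging below $v$.

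The algorithm maintains the set $U$ of vertices not yet placed in any path, initialised to $\XX$. While $U\neq\emptyset$, it picks a vertex $v\in U$ of minimum depth, walks downward from $v$ by repeatedly moving to an arbitrary child that lies in $U$, and stops at the first vertex having no child in $U$; it outputs the set of visited vertices as the next path $\QQ_i$ and deletes them from $U$. Each iteration is clearly polynomial time and there are at most $|\XX|$ of them.

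The crucial structural claim --- and the step I expect to be the main obstacle --- is that \emph{whenever the algorithm selects a minimum-depth vertex $v\in U$, the entire subtree $\TT_v$ still lies in $U$}. I would prove this by contradiction: among the already-covered descendants of $v$, pick one, say $w$, of minimum depth; then the parent of $w$ is either $v$ or a strictly shallower (hence, by the minimality of $w$, still uncovered) descendant of $v$, so the parent of $w$ is in $U$ at the current moment. But $w$ was deleted from $U$ as part of some earlier downward path $P$: if $w$ is not the top vertex of $P$, then the parent of $w$ was deleted together with $w$, contradicting that this parent is still in $U$; and if $w$ is the top vertex of $P$, then $w$ was a minimum-depth element of $U$ at that earlier time, which is impossible since $v$ was already in $U$ then (the set $U$ only shrinks) and has strictly smaller depth. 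Two consequences follow immediately: first, every vertex met on the downward walk has all of its children in $U$, so the walk stops only at a vertex with no children at all, i.e. at a leaf of $\TT$ --- hence each output set really is a path of $\TT$ whose lower endpoint is a leaf; second, the output paths are pairwise vertex-disjoint (each is removed wholesale from $U$) and cover $\XX$ (the loop ends only when $U$ is empty), and so their edge sets are pairwise disjoint as well.

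Finally I would count the paths through the leaves. The lower endpoint of every output path is a leaf of $\TT$, while its upper endpoint $v$ is a leaf only when $v$ has no children (using $\TT_v\subseteq U$), in which case the path is the single vertex $\{v\}$; and for the very first path the upper endpoint is the non-leaf $r$. Hence every output path has exactly one endpoint that is a leaf of $\TT$. Conversely, every leaf of $\TT$ lies in exactly one output path and, having degree one, must be an endpoint of it, necessarily its leaf-endpoint. Therefore assigning to each path its leaf-endpoint is a bijection onto the set of leaves of $\TT$, so the number of output paths is exactly $\el$, which completes the argument.
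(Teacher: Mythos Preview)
Your proof is correct and follows the same greedy peeling strategy as the paper: root at a non-leaf, repeatedly strip a path from the shallowest uncovered vertex down to a leaf, using the invariant that the subtree below the chosen vertex is still entirely uncovered. If anything, your treatment is more careful than the paper's---your two-case contradiction for the invariant is cleaner than the paper's brief ``two paths from $r$ to $w$'' sketch, and your leaf-bijection pins the count at exactly~$\el$, whereas the paper's argument only explicitly gives at most~$\el$.
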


\longversion{
\begin{proof}
 We first make the tree \TT rooted at any arbitrary nonleaf node $r$. We now partition the tree \TT into paths iteratively as follows. Initially every node of the tree \TT is {\em unmarked} and the set of paths \QQ we have is empty. Let $Q_1 = (\XX_1, \EE_1)$ be the path in \TT from the root node to any leaf node. We put $Q_1$ into \QQ and {\em mark} all the nodes in $Q_1$. More generally, in the $i^{th}$ iteration we pick an unmarked node $u$ that is closest to the root $r$, breaking ties arbitrarily, and add any path $\QQ_i$ in \TT from $u$ to any leaf node in the subtree $\TT_u$ rooted at $u$, and {\em mark} all the nodes in $\QQ_i = (\XX_i, \EE_i)$. Since $u$ is unmarked, we claim that every node in $\TT_u$ is unmarked. Indeed, otherwise suppose there is a node $w$ in $\TT_u$ which is already marked. Then there exists two paths from $r$ to $w$ one including $u$ and another avoiding $u$ since $u$ is currently unmarked and $w$ is already marked. This contradicts the fact that \TT is a tree. Hence every node in $\TT_u$ is unmarked. We continue until all the leaf nodes are marked and return \QQ. Since \TT has \el leaves and in every iteration at least one leaf node is marked (since every $\QQ_i$ contains a leaf node), the algorithm runs for at most \el iterations. Notice that, since the algorithm always picks a path consisting of unmarked vertices only, the set of paths in \QQ are pairwise disjoint. We claim that \QQ forms a partition of \TT. Indeed, otherwise there must be a node $x$ in \TT that remains unmarked at the end. From the claim above, we have all the nodes in the subtree $\TT_x$ rooted at $x$ unmarked which includes at least one leaf node of \TT. This contradicts the fact that the algorithm terminates when all the leaf nodes are marked.
\end{proof}
}

Using \Cref{lem:path_decom}, and the fact that any path can account for at most two leaves, we have that the path cover number of a tree is same as the number of leaves up to a factor of two.

\begin{lemma}\label{lem:path_leaf}
 Suppose the path cover number of a tree \TT with \el leaves is $k$. Then we have $\nfrac{\el}{2}\le k\le \el$.
\end{lemma}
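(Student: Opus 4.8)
The plan is to prove the two inequalities separately; only the lower bound requires any real work. The upper bound $k \le \el$ is immediate from \Cref{lem:path_decom}, which exhibits an explicit partition of $\TT$ into exactly $\el$ pairwise disjoint paths. Since $k$ is by definition the cardinality of a \emph{smallest} set of disjoint paths covering $\TT$, it can only be smaller, so $k \le \el$.

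For the lower bound $\nfrac{\el}{2} \le k$, I would fix \emph{any} set of disjoint paths $\QQ = \{Q_1 = (\XX_1,\EE_1), \ldots, Q_k = (\XX_k,\EE_k)\}$ covering $\TT$ and argue directly that $\el \le 2k$; applying this to a minimum cover then yields the claim. The key observation is that every leaf $v$ of $\TT$ must be an \emph{endpoint} of the unique path $Q_i$ with $v \in \XX_i$. Indeed, since $\EE_i \subseteq \EE$, the degree of $v$ inside $Q_i$ is at most its degree in $\TT$, which is $1$; hence $v$ cannot be an internal vertex of $Q_i$ (internal vertices of a nontrivial path have degree $2$), so $v$ is an endpoint of $Q_i$, or the sole vertex of a trivial one-vertex path $Q_i$, which we may also regard as an endpoint. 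A path has at most two endpoints, and the paths $Q_i$ are vertex-disjoint, so charging each leaf of $\TT$ to an endpoint of the path containing it gives an injection from the leaf set of $\TT$ into a set of size at most $2k$. Thus $\el \le 2k$, i.e.\ $k \ge \nfrac{\el}{2}$.

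This is essentially a counting/charging argument, and I do not anticipate a genuine obstacle. The only points requiring a little care are the bookkeeping for degenerate single-vertex paths in the cover (each contributes at most one, hence at most two, ``endpoint slots''), and checking that the charging from leaves to endpoints is injective — which follows immediately from the disjointness of the vertex sets $\XX_i$.
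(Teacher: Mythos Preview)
Your proof is correct and follows essentially the same approach as the paper: the upper bound via \Cref{lem:path_decom}, and the lower bound by observing that each path in a cover can contain at most two leaves of \TT (you make the ``leaf must be an endpoint'' step explicit, while the paper simply states the conclusion). There is no substantive difference.
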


\longversion{
\begin{proof}
 The inequality $\nfrac{\el}{2}\le k$ follows from the fact that any path in \TT can involve at most two leaves in \TT and there exists $k$ paths covering all the leaf nodes. The inequality $k\le \el$ follows from the fact from \Cref{lem:path_decom} that any tree \TT with \el leaves can be partitioned into \el paths.
\end{proof}
}

}

\subsection{Algorithmic Results}

We now present our main algorithmic results. We begin with generalizing the result of \Cref{thm:con} to any single peaked profiles on trees whose path cover number is at most $k$. The idea is to partition the tree into $k$ disjoint paths, use the algorithm from \Cref{thm:con} on each paths to obtain an order of the candidates on each path of the partition, and finally merge these suborders intelligently. We now formalize this idea as follows.

\begin{theorem}\label{thm:pcover_elicit_ub}
 There is a \PE algorithm with query complexity $\BigO(mn\log k)$ for profiles that are single peaked on trees with path cover number at most $k$.
\end{theorem}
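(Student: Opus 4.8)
The plan is to reduce the problem on a tree with path cover number $k$ to $k$ independent instances of the single-peaked (path) elicitation problem, and then stitch the resulting suborders together. First I would invoke the structural fact that $\TT$ can be partitioned into $k$ disjoint paths $Q_1, \ldots, Q_k$ with vertex sets $\XX_1, \ldots, \XX_k$ (this is exactly the path cover number assumption; the partition is polynomially computable). The first key observation is that if $\PP$ is single peaked on $\TT$, then for each $i$ the restricted profile $\PP(\XX_i)$ is single peaked with respect to the order induced by the path $Q_i$ — this is immediate from the definition of single-peakedness on a tree, since $Q_i$ is a path in $\TT$ and single-peakedness on $\TT$ demands single-peakedness along every path. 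Hence I can run the algorithm of~\Cref{thm:con} on each $\PP(\XX_i)$, using only queries of the form $\Query(x \succ_\el y)$ with $x, y \in \XX_i$, to recover $\succ_\el(\XX_i)$ for every voter $\el$ and every $i \in [k]$. This costs $\BigO(|\XX_i| \, n)$ queries for block $i$, for a total of $\BigO(mn)$ queries across all blocks, since $\sum_i |\XX_i| = m$.

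The second stage is the merge. For each voter $\el$ I now hold $k$ total orders $\succ_\el(\XX_1), \ldots, \succ_\el(\XX_k)$ whose underlying sets partition $\CC$, and I want to assemble the full order $\succ_\el$. I would do this by repeatedly applying the linear-time merge subroutine of~\Cref{obs:merge}: merge $\succ_\el(\XX_1)$ with $\succ_\el(\XX_2)$ to get an order on $\XX_1 \cup \XX_2$, then merge that with $\succ_\el(\XX_3)$, and so on. A naive sequential merge would cost $\BigO(km)$ per voter; to get the claimed $\BigO(m \log k)$ per voter I would instead merge in a balanced binary-tree fashion — pair up the $k$ blocks, merge within pairs, then pair up the $\nfrac{k}{2}$ resulting orders, and recurse. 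There are $\BigO(\log k)$ levels, and each level touches each candidate once, so each level costs $\BigO(m)$ queries, giving $\BigO(m \log k)$ queries per voter and $\BigO(mn\log k)$ in total. Adding the $\BigO(mn)$ from the first stage leaves the bound at $\BigO(mn\log k)$.

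The only subtle point — and the place I expect to have to argue carefully — is that~\Cref{obs:merge} is stated for a \emph{two-way} merge and guarantees correctness when $\succ$ is ``a ranking of the candidates in $\CC$'': I need to make sure the intermediate objects produced during the balanced merge are themselves legitimate restrictions $\succ_\el(\YY)$ of the true vote to a subset $\YY \subseteq \CC$, so that~\Cref{obs:merge} applies at each internal node of the merge tree. This holds because merging $\succ_\el(\YY_1)$ and $\succ_\el(\YY_2)$ correctly (which~\Cref{obs:merge} guarantees) yields exactly $\succ_\el(\YY_1 \cup \YY_2)$, and the blocks at each level are disjoint, so induction up the merge tree goes through. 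A second minor point is bookkeeping on the query count: queries made inside a block during stage one and queries made during the merges in stage two are on genuinely distinct voter–pair tuples (or at worst could be memoized), so summing the two stages is legitimate. Everything else is routine, and the $\BigO(mn)$ algorithm of~\Cref{thm:con} plus the $\BigO(|\CC|)$ merge of~\Cref{obs:merge} are used exactly as black boxes.
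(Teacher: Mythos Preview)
Your proposal is correct and follows essentially the same approach as the paper: partition the tree into at most $k$ disjoint paths, elicit each restricted preference using \Cref{thm:con} for a total of $\BigO(mn)$ queries, and then merge the $k$ suborders per voter via balanced divide-and-conquer for $\BigO(m\log k)$ queries per voter. Your write-up is in fact more explicit than the paper's on two points the paper leaves implicit --- that the restriction to each path is itself single peaked, and that the intermediate merged orders are genuine restrictions so that \Cref{obs:merge} applies inductively --- but the underlying argument is identical.
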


\begin{proof}
 Since the path cover number is at most $k$, We can partition the tree $\TT = (\CC, \EE)$ into $t$ disjoint paths $\PP_i = (\CC_i, \EE_i), i\in[t]$, where $t$ is at most $k$. We now show that we can elicit any preference $\succ$ which is single peaked on the tree \TT by making $\BigO(m\log t)$ queries which in turn proves the statement. We first find the preference ordering restricted to $\CC_i$ using \Cref{thm:con} by making $\BigO(|\CC_i|)$ queries for every $i\in[t]$. This step needs $\sum_{i\in[t]} \BigO(|\CC_i|) = \BigO(m)$ queries since $\CC_i, i\in[t]$ forms a {\em partition} of \CC. We next merge the $t$ orders $\succ(\CC_i), i\in[t],$ to obtain the complete preference $\succ$ by using a standard divide and conquer approach for $t$-way merging which makes $O(m \log t)$ queries \cite{hopcroft1983data}. Thus the query complexity of the algorithms is $\BigO(m + m\log t) = \BigO(m\log k)$.
\end{proof}

Towards our results on leaves, we show that the path cover of a tree with $\el$ leaves is at least $\lfloor\el/2 \rfloor$ and at most $\el$. The lower bound is easy since every path can account for at most two leaves. The upper bound comes from a careful partitioning of the nodes following maximal paths from leaves up to a root and using a careful marking scheme. We defer the details of this argument to a full version. Using the fact that a tree with $\el$ leaves can be partitioned into at most $\el$ paths, we can use the algorithm from~\Cref{thm:pcover_elicit_ub} to obtain the following bound in terms of leaves.

 \longversion{Initially we have $t$ orders to merge. We arbitrarily pair the $t$ orders into $\lceil \nfrac{t}{2}\rceil$ pairs with at most one of them being singleton (when $t$ is odd). By renaming, suppose the pairings are as follows: $(\succ(\CC_{2i-1}), \succ(\CC_{2i})), i\in[\lfloor\nfrac{t}{2}\rfloor]$. Let us define $\CC_i^\pr = \CC_{2i-1} \cup \CC_{2i}$ for every $i\in[\lfloor\nfrac{t}{2}\rfloor]$ and $\CC_{\lceil\nfrac{t}{2}\rceil}^\pr = \CC_t$ if $t$ is an odd integer. We merge $\succ(\CC_{2i-1})$ and $\succ(\CC_{2i})$ to get $\succ(\CC_i^\pr)$ for every $i\in[\lfloor\nfrac{t}{2}\rfloor]$ using \Cref{obs:merge}. The number queries the algorithm makes in this iteration is $\sum_{i\in[\lfloor\nfrac{t}{2}\rfloor]} \BigO(|\CC_{2i-1}| + |\CC_{2i}|) = \BigO(m)$ since $(\CC_i)_{i\in[t]}$ forms a partition of \CC. At the end of the first iteration, we have $\lceil\nfrac{t}{2}\rceil$ orders $\succ(\CC_i^\pr), i\in[\lceil\nfrac{t}{2}\rceil]$ to merge to get $\succ$. The algorithm repeats the step above $\BigO(\log t)$ times to obtain $\succ$ and the query complexity of each iteration is $\BigO(m)$. } 
 
\begin{corollary}\label{cor:leaves_elicit_ub}
 There is a \PE algorithm with query complexity $\BigO(mn\log \el)$ for profiles that are single peaked on trees with at most $\el$ leaves.
\end{corollary}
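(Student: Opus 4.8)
The plan is to reduce the statement directly to Theorem~\ref{thm:pcover_elicit_ub} via the structural bound on path cover number in terms of the number of leaves. Concretely, I would proceed as follows. First, I invoke the (deferred) structural lemma — the one asserting that a tree with $\el$ leaves has path cover number at most $\el$ — which itself follows from the marking/partitioning argument sketched in the excerpt: root the tree at an arbitrary internal node, repeatedly pick an as-yet-unmarked node $u$ closest to the root, grow a path from $u$ down to some leaf in the subtree $\TT_u$ (all of whose nodes are still unmarked, by the tree property), mark that path, and iterate; since each iteration marks at least one fresh leaf, the process terminates after at most $\el$ iterations and yields at most $\el$ pairwise-disjoint paths covering all of \CC. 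Hence the path cover number $k$ of \TT satisfies $k \le \el$.

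Second, I apply Theorem~\ref{thm:pcover_elicit_ub} to \TT with this value of $k$: there is a \PE algorithm with query complexity $\BigO(mn\log k)$ for profiles single peaked on \TT. Since $k \le \el$, we have $\log k \le \log \el$, so the query complexity is $\BigO(mn\log \el)$, which is exactly the claimed bound. I would also note that the path decomposition used inside Theorem~\ref{thm:pcover_elicit_ub} can be taken to be the one produced by the marking procedure above, so the whole pipeline remains polynomial-time and constructive: decompose into at most $\el$ paths, run the single-peaked elicitation algorithm of Theorem~\ref{thm:con} on each path (total $\BigO(mn)$ queries since the paths partition \CC), then $t$-way merge the resulting suborders per voter in $\BigO(m\log \el)$ queries using divide-and-conquer merging.

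There is essentially no obstacle here beyond bookkeeping; the only point that needs a word of care is that the bound must hold for a tree with \emph{at most} $\el$ leaves, not exactly $\el$ — but this is immediate since a tree with $\el' \le \el$ leaves has path cover number at most $\el' \le \el$, and $\log$ is monotone, so the $\BigO(mn\log\el)$ bound degrades gracefully. The genuinely non-trivial content lives in Theorem~\ref{thm:pcover_elicit_ub} and in the structural lemma bounding path cover by the leaf count (the careful marking scheme ensuring the greedily chosen paths stay disjoint and exhaust the vertex set); given those, the corollary is a one-line consequence.
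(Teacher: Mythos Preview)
Your proposal is correct and is exactly the paper's approach: it invokes the structural lemma that a tree with $\el$ leaves has path cover number at most $\el$ (via the root-and-mark greedy path decomposition you describe, which is \Cref{lem:path_decom} in the long version) and then applies \Cref{thm:pcover_elicit_ub} directly. There is nothing to add; the corollary is indeed a one-line consequence once the structural bound and \Cref{thm:pcover_elicit_ub} are in hand.
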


Finally, if we are given a subset of vertices whose removal makes the given tree a path, then we have an elicitation algorithm that makes $\BigO(mn + nd\log d)$. As before, we can determine the ordering among the candidates on the path with $\BigO(m-d)$ queries, and we determine the ordering among the rest in $\BigO(d\log d)$ queries using~\Cref{obs:trivial}, and finally merge using~\Cref{obs:merge}. This leads us to the following. \shortversion{(We defer the proofs of the results which are marked star to full version of the paper.)}

\begin{theorem}\label{thm:pdist_elicit_ub}
 There is a \PE algorithm with query complexity $\BigO(mn + nd\log d)$ for profiles that are single peaked on trees with distance $d$ from path.
\end{theorem}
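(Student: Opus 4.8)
The plan is to mimic the structure of the proof of Theorem~\ref{thm:pcover_elicit_ub}, replacing the "$k$ disjoint paths" decomposition by a "one long path plus $d$ leftover vertices" decomposition. Let $D \subseteq \CC$ be the given set of $d$ vertices whose removal turns $\TT$ into a path $\QQ$ on the vertex set $\CC \setminus D$; write $\CC_1 = \CC \setminus D$ and $\CC_2 = D$, so that $(\CC_1, \CC_2)$ is a partition of $\CC$ with $|\CC_1| = m-d$ and $|\CC_2| = d$. As in all the algorithmic results of this section, it suffices to elicit a single vote $\succ$ (which is single peaked on $\TT$) using $\BigO(m + d\log d)$ queries, since multiplying by $n$ gives the claimed bound.

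First I would elicit $\succ(\CC_1)$, the restriction of the vote to the long path. The key observation is that if $\PP$ is single peaked on $\TT$, then $\PP(\CC_1)$ is single peaked with respect to the order induced by $\QQ$: indeed $\QQ$ is itself a path in $\TT$, so by the definition of "single peaked on a tree" the restriction to the candidates of any path — in particular $\CC_1$ — is single peaked with respect to that path's induced order. Hence we may invoke Theorem~\ref{thm:con} (Conitzer's algorithm for single-peaked profiles) on the candidate set $\CC_1$ with the harmonious order given by $\QQ$, eliciting $\succ(\CC_1)$ with $\BigO(|\CC_1|) = \BigO(m-d) = \BigO(m)$ queries. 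Second, I would elicit $\succ(\CC_2)$, the restriction to the $d$ deleted vertices, using the trivial sorting-based algorithm of Observation~\ref{obs:trivial}, which costs $\BigO(d \log d)$ queries. Third, I would merge the two suborders $\succ(\CC_1)$ and $\succ(\CC_2)$ into the full order $\succ$ using the linear-time merge of Observation~\ref{obs:merge}, at a cost of $\BigO(|\CC|) = \BigO(m)$ queries. Summing, one vote costs $\BigO(m + d\log d)$ queries, and over $n$ voters this is $\BigO(mn + nd\log d)$, as desired.

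I do not anticipate a serious obstacle; the only point requiring a little care is the justification that Theorem~\ref{thm:con} is applicable, i.e. that $\PP(\CC_1)$ really is single peaked with respect to a known linear order over $\CC_1$ — and that order is exactly the one read off the path $\QQ$, which we are handed as part of the input (or can compute in polynomial time from $D$). One should also note that Conitzer's algorithm only needs the harmonious order over the candidates it is actually ranking, so restricting attention to $\CC_1$ causes no difficulty. Everything else is bookkeeping: the three phases are run independently per voter, the query counts add, and the partition structure guarantees the total over all paths/pieces telescopes to $\BigO(m)$ for the path part and $\BigO(d\log d)$ for the leftover part.
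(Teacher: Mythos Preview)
Your proposal is correct and follows essentially the same approach as the paper: partition $\CC$ into the path $\CC\setminus D$ and the deletion set $D$, apply \Cref{thm:con} on the path part, \Cref{obs:trivial} on $D$, and merge via \Cref{obs:merge}. Your write-up even makes explicit the justification (that $\PP(\CC_1)$ is single peaked with respect to the order induced by $\QQ$) which the paper leaves implicit.
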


\longversion{
\begin{proof}
 Let \XX be the smallest set of nodes of a tree $\TT = (\CC, \EE)$ such that $\TT\setminus\XX$, the subgraph of \TT after removal of the nodes in \XX, is a path. We have $|\XX|\le d$. For any preference $\succ$, we make $\BigO(d\log d)$ queries to find $\succ(\XX)$ using \Cref{obs:trivial}, make $\BigO(|\CC\setminus\XX|) = \BigO(m-d)$ queries to find $\succ(\CC\setminus\XX)$ using \Cref{thm:con}, and finally make $\BigO(m)$ queries to find $\succ$ by merging $\succ(\XX)$ and $\succ(\CC\setminus\XX)$ using \Cref{obs:merge}. This gives an overall query complexity of $\BigO(mn + nd\log d)$.
\end{proof}
}

\subsection{Lower Bounds}

We now turn to the lower bounds. Our first result is based on a counting argument, showing that the query complexity in terms of the number of leaves, given by \Cref{cor:leaves_elicit_ub}, is tight up to constant factors. Indeed, let us consider a subdivision of a star with $\el$ leaves and let $t$ denote the distance from the center, so that we have a total of $t\el + 1$ vertices. One can show that if the candidates are written out in level order, the candidates that are distance $i$ from the star can be ordered arbitrarily within this ordering. This tells us that the number of possible preferences $\succ$ that are single peaked on the tree \TT is at least $(\el!)^t$. We obtain the lower bound by using a decision tree argument, wherein we are able to show that it is always possible for an oracle answering the comparison queries to answer in such a way that the total space of possibilities decreases by at most a factor of half. Since the tree must entertain at least $(\el!)^t$ leaves to account for all possibilities, we obtain the claimed lower bound.

\begin{theorem}\label{thm:leaves_elicit_lb}\shortversion{$[\star]$}
 Let $\TT = (\CC, \EE)$ be a balanced subdivision of a star with $\el$ leaves. Then any \PE algorithm for single peaked profiles on \TT has query complexity $\Omega(mn\log \el)$.
\end{theorem}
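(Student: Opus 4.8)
The plan is to pair a counting bound on the number of profiles single peaked on $\TT$ with an adversary argument against comparison queries. Write $t$ for the common distance of every leaf of $\TT$ from its center $c$, so that $m = t\el+1$, and label the non-center vertices $a_{i,j}$, where $a_{i,j}$ is the vertex at distance $i$ from $c$ lying on the $j$-th arm, $i\in[t]$, $j\in[\el]$. Call $L_0 = \{c\}$ and $L_i = \{a_{i,1},\dots,a_{i,\el}\}$ the levels of $\TT$. For each tuple $\sigma = (\sigma_1,\dots,\sigma_t)$ of permutations of $[\el]$, let $\succ_\sigma$ be the vote that ranks $c$ first, then the candidates of $L_1$ in the order dictated by $\sigma_1$, then those of $L_2$ in the order dictated by $\sigma_2$, and so on down to $L_t$. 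Distinct tuples visibly give distinct votes, so it suffices to show that every $\succ_\sigma$ is single peaked on $\TT$: this yields at least $(\el!)^t$ votes single peaked on $\TT$, hence at least $(\el!)^{tn}$ such profiles.

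First I would verify the single-peakedness claim. Fix $\sigma$ and two candidates $x,y$, and let $P$ be the unique $x$--$y$ path in $\TT$. The key observation is that $P$ contains a unique vertex $z$ of smallest level, and $P$ with $z$ removed falls into at most two sub-paths, along each of which the level strictly increases as one moves away from $z$; this is immediate whether $P$ lies inside a single arm (then $z$ is its endpoint closest to $c$) or passes through $c$ (then $z = c$). Since $\succ_\sigma$ ranks every candidate of a lower level strictly above every candidate of a higher level, the restriction of $\succ_\sigma$ to $P$ has $z$ at the top and is strictly decreasing along $P$ away from $z$ on both sides, which is exactly the statement that $\succ_\sigma$ restricted to $P$ is single peaked with respect to the order induced by $P$. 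As $x,y$ were arbitrary, $\succ_\sigma$ is single peaked on $\TT$.

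Next I would run the standard adversarial decision-tree argument. Fix any correct deterministic \PE algorithm \AA, and let $\SB$ be the collection of all profiles of the form above, so $|\SB|\ge (\el!)^{tn}$ initially; maintain $\SB$ as the set of these profiles that are consistent with the answers returned so far. When \AA issues a query $\text{\Query}(c_i\succ_\ell c_j)$, every profile in $\SB$ answers it one way or the other, splitting $\SB$ into two parts; the adversary returns the answer that keeps the larger part, so each distinct query shrinks $|\SB|$ by a factor of at most two (a repeated query conveys nothing new). If \AA stops while $|\SB|\ge 2$, two distinct profiles are consistent with every answer and \AA errs on one of them; hence \AA makes at least $\log_2|\SB| \ge tn\log_2(\el!)$ distinct queries. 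Finally, Stirling's estimate gives $\log_2(\el!) = \Theta(\el\log\el)$, and since $t\el = m-1$ this is $\Omega(tn\el\log\el) = \Omega(mn\log\el)$ queries, as claimed.

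I expect the single-peakedness verification to be the only step requiring genuine care: one must check that permuting candidates arbitrarily within each level never breaks single-peakedness on \emph{any} path of $\TT$, in particular on paths contained in one arm and on paths crossing the center, and the ``minimum-level vertex is the peak, levels increase outward'' observation is what makes this go through cleanly. The counting step and the entropy/adversary argument are routine once this lemma is in hand.
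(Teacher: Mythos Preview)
Your proposal is correct and follows essentially the same approach as the paper: build the same family of level-ordered preferences, verify single-peakedness via the minimum-level vertex on each path, count $(\el!)^t$ such votes, and finish with a halving adversary. The only cosmetic difference is that you run the adversary on the set of full profiles while the paper maintains a separate consistent set $\RR_v$ per voter; since the profile family factors as a product over voters, the two formulations yield the same bound.
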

\longversion{
\begin{proof}
 Suppose the number of candidates $m$ be $(t\el + 1)$ for some integer $t$. Let $c$ be the center of \TT. We denote the shortest path distance between any two nodes $x, y\in\CC$ in \TT by $d(x, y)$. We consider the partition $(\CC_0, \ldots, \CC_t)$ of the set of candidates \CC where $\CC_i = \{ x\in\CC : d(x, c) = i\}$. We claim that the preference $\succ = \pi_0 \succ \pi_1 \succ \cdots \succ \pi_t$ of the set of candidates \CC is single peaked on the tree \TT where $\pi_i$ is any arbitrary order of the candidates in $\CC_i$ for every $0\le i\le t$. Indeed; consider any path $\QQ = (\XX, \EE^\pr)$ in the tree \TT. Let $y$ be the candidate closest to $c$ among the candidates in \XX; that is $y = \argmin_{x\in\XX} d(x,c)$. Then clearly $\succ(\XX)$ is single peaked with respect to the path \QQ having peak at $y$. We have $|\CC_i| = \el$ for every $i\in[t]$ and thus the number of possible preferences $\succ$ that are single peaked on the tree \TT is at least $(\el!)^t$. 
 
 Let \AA be any \PE algorithm for single peaked profiles on the tree \TT. We now describe our oracle to answer the queries that the algorithm \AA makes. For every voter $v$, the oracle maintains the set $\RR_v$ of possible preferences of the voter $v$ which is consistent with the answers to all the queries that the algorithm \AA have already made for the voter $v$. At the beginning, we have $|\RR_v| \ge ( \el!)^t$ for every voter $v$ as argued above. Whenever the oracle receives a query on $v$ for any two candidates $x$ and $y$, it computes the numbers $n_1$ and $n_2$ of orders in $\RR_v$ which prefers $x$ over $y$ and $y$ over $x$ respectively; the oracle can compute the integers $n_1$ and $n_2$ since the oracle has infinite computational power. The oracle answers that the voter $v$ prefers the candidate $x$ over $y$ if and only if $n_1\ge n_2$ and updates the set $\RR_v$ accordingly. Hence, whenever the oracle is queried for a voter $v$, the size of the set $\RR_v$ decreases by a factor of at most two. On the other hand, we must have, from the correctness of the algorithm, $\RR_v$ to be a singleton set when the algorithm terminates for every voter $v$ -- otherwise there exists a voter $v$ (whose corresponding $\RR_v$ is not singleton) for which there exist two possible preferences which are single peaked on the tree \TT and are consistent with all the answers the oracle has given and thus the algorithm \AA fails to output the preference of the voter $v$ correctly. Hence every voter must be queried at least $\Omega(\log((\el!)^t)) = \Omega(t \el\log \el) = \Omega(m\log \el)$ times.
\end{proof}
}

Since the path cover number of a subdivided star on $\el$ leaves is at least 
$\el/2$, we also have the following.

\begin{corollary}\label{cor:pcover_elicit_lb}
 There exists a tree \TT with path cover number $k$ such that any \PE algorithm for single peaked profiles on \TT has query complexity $\Omega(mn\log k)$.
\end{corollary}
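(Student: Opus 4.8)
The plan is to derive \Cref{cor:pcover_elicit_lb} as an immediate consequence of \Cref{thm:leaves_elicit_lb} together with \Cref{lem:path_leaf} (the fact that a tree with $\el$ leaves has path cover number between $\el/2$ and $\el$). First I would observe that the balanced subdivision of a star with $\el$ leaves used in \Cref{thm:leaves_elicit_lb} has path cover number $k$ satisfying $\lfloor \el/2\rfloor \le k \le \el$; concretely, the $\el$ arms of the subdivided star, each together with (a share of) the center, can be covered by $\lceil \el/2\rceil$ paths by pairing arms through the center, and no fewer than $\lceil \el/2\rceil$ paths suffice since each path passes through at most two of the $\el$ leaves. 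Hence $k = \Theta(\el)$, so $\log k = \Theta(\log \el)$.

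Next I would simply re-instantiate \Cref{thm:leaves_elicit_lb}: for any target value of $k$, pick $\el$ with $\lfloor\el/2\rfloor \le k \le \el$ (so $\el \le 2k$), take \TT to be a balanced subdivision of a star on $\el$ leaves with $m = t\el + 1$ candidates for a suitable integer $t$, and conclude that any \PE algorithm on single-peaked profiles on \TT has query complexity $\Omega(mn\log \el) = \Omega(mn\log k)$, where the last equality uses $\log \el = \Theta(\log k)$ from the previous paragraph. This \TT has path cover number exactly $k$ (or within the stated range; if an exact value of $k$ is desired one can attach short pendant paths or adjust the number of arms to hit $k$ precisely without changing the asymptotics), which is what the corollary claims.

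I do not expect any real obstacle here: the corollary is genuinely a corollary, and the only things to be careful about are (i) making sure the relationship $k = \Theta(\el)$ is stated cleanly — this is exactly \Cref{lem:path_leaf}, which the paper has already set up — and (ii) being slightly careful that the construction in \Cref{thm:leaves_elicit_lb} requires $m$ of the form $t\el+1$, so that for a given $m$ and $k$ one may need $t = \Theta(m/k)$, but since the bound $\Omega(mn\log k)$ is asymptotic this causes no difficulty. The one genuinely minor point worth a sentence is justifying that the subdivided star on $\el$ leaves cannot be covered by fewer than $\lceil \el/2 \rceil$ disjoint paths, which follows because the $\el$ degree-one vertices must each be an endpoint of some covering path and a path has only two endpoints.
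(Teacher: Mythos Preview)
Your proposal is correct and matches the paper's approach exactly: the paper derives \Cref{cor:pcover_elicit_lb} by noting that the balanced subdivision of a star on $\el$ leaves used in \Cref{thm:leaves_elicit_lb} has path cover number at least $\el/2$ (this is \Cref{lem:path_leaf}), so $\log k = \Theta(\log \el)$ and the $\Omega(mn\log\el)$ bound transfers to $\Omega(mn\log k)$. The extra care you take about hitting a prescribed $k$ exactly and about $m = t\el+1$ is more detail than the paper provides, but is consistent with its argument.
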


Mimicking the level order argument above on a generic tree with $\el$ leaves, and using the connection between path cover and leaves, we obtain lower bounds that are functions of $(n,\el)$ and $(n,k)$, as given below. This will be useful for our subsequent results. 


\begin{theorem}\label{thm:pcover_elicit_lb_any}\shortversion{$[\star]$}
 Let $\TT = (\CC, \EE)$ be any arbitrary tree with $\el$ leaves and path cover number $k$. Then any \PE algorithm for single peaked profiles on \TT has query complexity $\Omega(n\el\log \el)$ and $\Omega(nk\log k)$.
\end{theorem}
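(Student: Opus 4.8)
The plan is to reuse, essentially verbatim, the decision-tree adversary from the proof of \Cref{thm:leaves_elicit_lb}; the only new ingredient needed is the combinatorial fact that \emph{every} tree $\TT$ with $\el$ leaves admits at least $\el!$ distinct preferences that are single peaked on $\TT$. Granting this, for each voter $v$ the oracle maintains the subset $\RR_v$ of these $\el!$ distinguished preferences that are still consistent with all answers given so far for $v$; it answers every query for $v$ so as to keep the larger side, so $|\RR_v|$ shrinks by a factor of at most two per query, and correctness of the algorithm forces $\RR_v$ to be a singleton when it halts. Hence voter $v$ must be queried $\Omega(\log(\el!)) = \Omega(\el\log\el)$ times, for a total of $\Omega(n\el\log\el)$. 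The bound $\Omega(nk\log k)$ is then immediate: by \Cref{lem:path_leaf} we have $k\le\el$, and $x\mapsto x\log x$ is nondecreasing, so $nk\log k\le n\el\log\el$.

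To manufacture the $\el!$ single peaked preferences I would push the idea behind \Cref{thm:leaves_elicit_lb} a step further. Let $S$ be the set of leaves of $\TT$, so $|S|=\el$, and let $\TT\setminus S$ be the tree obtained by deleting them (deleting degree-one vertices keeps the graph connected and acyclic; if $\TT\setminus S$ has at most one vertex then $\TT$ is a star, an edge, or a single vertex and the claim is trivial, so assume it has at least two). Fix one preference $\succ_0$ on $\CC\setminus S$ that is single peaked on $\TT\setminus S$ --- for concreteness, the level order of $\TT$ restricted to its internal vertices, which is single peaked on $\TT\setminus S$ by the same ``depth is valley-shaped along any path'' observation used in \Cref{thm:leaves_elicit_lb}. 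For each permutation $\pi$ of $S$, let $\succ_\pi$ rank all of $\CC\setminus S$ above all of $S$, ordering the top block by $\succ_0$ and the bottom block by $\pi$; these $\el!$ orders are pairwise distinct, so it suffices to check each is single peaked on $\TT$. (Note the contrast with \Cref{thm:leaves_elicit_lb}: merely permuting within each \emph{level} would yield only $\prod_d |L_d|!$ preferences, which can be as small as $2^{\Theta(\el)}$ on, say, a caterpillar; the point here is that \emph{all} leaves, regardless of their depths, can be dropped to the bottom simultaneously in any order.)

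The verification is a short case analysis on an arbitrary path $Q$ of $\TT$, and I expect it to be the only real content. The structural key is that a leaf has degree one and hence can lie on $Q$ only as one of its two endpoints; therefore $Q$ meets $S$ in at most two vertices, both endpoints, and deleting them from $Q$ leaves a subpath $Q'$ entirely inside $\CC\setminus S$. Since $Q'$ is a path of $\TT\setminus S$, the restriction of $\succ_\pi$ (which agrees with $\succ_0$ there) to $Q'$ is ``mountain shaped'' along $Q'$ with some peak. Now the one or two deleted endpoints sit strictly below every vertex of $Q'$ in $\succ_\pi$, so prepending and/or appending them merely lengthens the mountain downward at its end(s); the result is still mountain shaped, i.e.\ $\succ_\pi$ restricted to $Q$ is single peaked with respect to the order induced by $Q$. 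Degenerate short paths ($|Q|\le 2$, or $Q'$ a single vertex) are immediate. The subtle point to state carefully is precisely this ``leaves are always endpoints, never interior'' claim, since it is exactly what makes it safe to collapse all $\el$ leaves to the bottom of the ranking without ever violating single-peakedness.
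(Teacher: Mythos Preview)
Your proposal is correct and follows essentially the same route as the paper: place all internal vertices above all leaves (with the internal vertices in level order from a root) and let the $\el$ leaves appear at the bottom in an arbitrary permutation, yielding $\el!$ distinct single-peaked preferences; then apply the halving-oracle decision-tree argument from \Cref{thm:leaves_elicit_lb} and invoke \Cref{lem:path_leaf} for the path-cover bound. The only difference is cosmetic: the paper verifies single-peakedness along a path $Q$ directly by observing that the vertex of $Q$ closest to the root is the peak (depth increases monotonically on both sides, and leaves are below all non-leaves), whereas you argue more modularly by first noting that $\succ_0$ is single peaked on the subtree $\TT\setminus S$ and then appending the at-most-two leaf endpoints at the bottom. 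Both verifications are equivalent and equally short.
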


\longversion{
\begin{proof}
 Let \XX be the set of leaves in \TT. We choose any arbitrary nonleaf node $r$ as the root of \TT. We denote the shortest path distance between two candidates $x, y\in\CC$ in the tree \TT by $d(x,y)$. Let $t$ be the maximum distance of a node from $r$ in \TT; that is $t = \max_{y\in\CC\setminus\XX} d(r, x)$. We partition the candidates in $\CC\setminus\XX$ as $(\CC_0, \CC_1, \ldots, \CC_t)$ where $\CC_i = \{y\in\CC\setminus\XX : d(r, y) = i\}$ for $0\le i\le t$. We claim that the preference $\succ = \pi_0 \succ \pi_1 \succ \cdots \succ \pi_t \succ \pi$ of the set of candidates \CC is single peaked on the tree \TT where $\pi_i$ is any arbitrary order of the candidates in $\CC_i$ for every $0\le i\le t$ and $\pi$ is an arbitrary order of the candidates in $\CC\setminus\XX$. Indeed, otherwise consider any path $\QQ = (\YY, \EE^\pr)$ in the tree \TT. Let $y$ be the candidate closest to $r$ among the candidates in \YY; that is $y = \argmin_{x\in\YY} d(x,r)$. Then clearly $\succ(\YY)$ is single peaked with respect to the path \QQ having peak at $y$. We have the number of possible preferences $\succ$ that are single peaked on the tree \TT is at least $|\XX|! = \el!$. Again using the oracle same as in the proof of \Cref{thm:leaves_elicit_lb}, we deduce that any \PE algorithm \AA for profiles that are single peaked on the tree \TT needs to make $\Omega(n\el\log \el)$ queries. The bound with respect to the path cover number now follows from \Cref{lem:path_leaf}.
\end{proof}
}

The following results can be obtained simply by applying~\Cref{thm:pcover_elicit_lb_any} on particular graphs. For instance, we use the fact that stars have $(m-1)$ leaves and have pathwidth one to obtain the first part of \Cref{cor:pwidth_elicit_lb}, while appealing to complete binary trees that have $O(m)$ leaves and pathwidth $O(\log m)$ for the second part. These examples also work in the context of maximum degree, while for diameter we use stars and caterpillars with a central path of length $m/2$.


\longversion{We next consider the parameter pathwidth of the underlying single peaked tree. We immediately get the following result for \PE on trees with pathwidths one or $\log m$ from \Cref{thm:pcover_elicit_lb_any} and the fact that the pathwidths of a star and a complete binary tree are one and $\log m$ respectively.}

\begin{corollary}\label{cor:pwidth_elicit_lb}
 There exist two trees \TT and $\TT^\pr$ with pathwidths one and $\log m$ respectively such that any \PE algorithm for single peaked profiles on \TT and $\TT^\pr$ respectively has query complexity $\Omega(mn\log m)$.
\end{corollary}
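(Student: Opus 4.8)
The plan is to exhibit two explicit families of trees, compute their pathwidths and their number of leaves (equivalently, their path cover numbers), and then invoke Theorem~\ref{thm:pcover_elicit_lb_any} as a black box. Concretely, for the first tree I would take $\TT$ to be the star $K_{1,m-1}$ on $m$ candidates: a center vertex adjacent to $m-1$ leaves. A star has pathwidth one, since the path decomposition consisting of the bags $\{c, \ell_i\}$ (for the center $c$ and each leaf $\ell_i$), arranged in any order, has width one, and the pathwidth cannot be zero as the tree has an edge. The star has $m-1$ leaves, so plugging $\el = m-1 = \Theta(m)$ into the bound $\Omega(n\el\log\el)$ of Theorem~\ref{thm:pcover_elicit_lb_any} immediately yields query complexity $\Omega(mn\log m)$ for any \PE algorithm on profiles single peaked on $\TT$.

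For the second tree I would take $\TT^\pr$ to be a complete binary tree on $m$ nodes (as defined in the preliminaries). Its height is $h = \Theta(\log m)$, and it is a standard fact that the pathwidth of a complete binary tree of height $h$ is $\Theta(h) = \Theta(\log m)$; I would either cite this or give the short inductive path-decomposition argument (a complete binary tree of height $h$ has pathwidth at most $h$, obtained by recursively laying out one subtree, then passing through the root, then the other subtree, keeping the root and the ``spine'' of ancestors in the active bags). A complete binary tree on $m$ nodes has $\Theta(m)$ leaves, so again with $\el = \Theta(m)$ Theorem~\ref{thm:pcover_elicit_lb_any} gives the lower bound $\Omega(n\el\log\el) = \Omega(mn\log m)$ for \PE on profiles single peaked on $\TT^\pr$.

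The combination of the two gives the statement: $\TT$ has pathwidth one, $\TT^\pr$ has pathwidth $\Theta(\log m)$, and on both of them elicitation requires $\Omega(mn\log m)$ queries, matching the trivial upper bound of \Cref{obs:trivial} and thus showing pathwidth gives no leverage (not even an inverse-dependence improvement, since a logarithmic pathwidth already costs the full $\log m$ factor). The only mild subtlety — and the single place where any real argument is needed rather than just bookkeeping — is certifying the pathwidth of the complete binary tree is $\Theta(\log m)$ rather than something smaller; everything else (the pathwidth of a star, the leaf counts) is immediate, and the hard analytic work has already been done in Theorem~\ref{thm:pcover_elicit_lb_any}, whose oracle/decision-tree argument we reuse verbatim. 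Since the paper only needs trees achieving pathwidths $1$ and $\log m$, it suffices to verify the pathwidth upper bound $O(\log m)$ for the binary tree and the lower bound $\Omega(\log m)$, or alternatively just invoke the known characterization; I would state it as a cited fact to keep the proof short.
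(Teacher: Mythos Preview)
Your proposal is correct and matches the paper's approach exactly: the paper also derives the corollary from Theorem~\ref{thm:pcover_elicit_lb_any} by taking $\TT$ to be a star (pathwidth one, $m-1$ leaves) and $\TT^\pr$ to be a complete binary tree (pathwidth $\Theta(\log m)$, $\Theta(m)$ leaves). If anything, you supply more justification for the pathwidth claims than the paper does, which simply asserts them as facts.
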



\begin{corollary}\label{cor:maxdeg_elicit_lb}
 There exist two trees \TT and $\TT^\pr$ with maximum degree $\Delta=3$ and $m-1$ respectively such that any \PE algorithm for single peaked profiles on \TT and $\TT^\pr$ respectively has query complexity $\Omega(mn\log m)$.
\end{corollary}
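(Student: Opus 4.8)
The plan is to obtain both parts as immediate consequences of \Cref{thm:pcover_elicit_lb_any}, which already says that for \emph{any} tree with $\el$ leaves, every \PE algorithm for single peaked profiles on it has query complexity $\Omega(n\el\log\el)$. So all that remains is to exhibit, for each $m$, a tree on the candidate set \CC with the prescribed maximum degree and with $\Theta(m)$ leaves, and then to observe that $\log\el = \Theta(\log m)$ in each case.

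For the tree $\TT$ with $\Delta = 3$, I would take $\TT$ to be a complete binary tree on $m$ vertices (as defined in \Cref{sec:prelim}): every nonleaf node has exactly two children, so every internal node has degree $3$ except the root which has degree $2$, giving $\Delta = 3$; and since every internal node has exactly two children, the number of leaves equals one plus the number of internal nodes, hence is $\lceil m/2 \rceil = \Theta(m)$. Applying \Cref{thm:pcover_elicit_lb_any} with $\el = \Theta(m)$ gives a lower bound of $\Omega\!\left(n\cdot \tfrac m2\cdot \log \tfrac m2\right) = \Omega(mn\log m)$, using $\log(m/2) = \Theta(\log m)$.

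For the tree $\TT^{\pr}$ with $\Delta = m-1$, I would take $\TT^{\pr}$ to be the star on \CC: a single center vertex adjacent to all $m-1$ remaining candidates. Its maximum degree is exactly $m-1$ and it has exactly $m-1$ leaves. Plugging $\el = m-1$ into \Cref{thm:pcover_elicit_lb_any} yields $\Omega\!\left(n(m-1)\log(m-1)\right) = \Omega(mn\log m)$, again since $\log(m-1) = \Theta(\log m)$.

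There is essentially no obstacle here: the whole content is the two elementary combinatorial observations (a star has $m-1$ leaves; a complete binary tree has $\Theta(m)$ leaves and maximum degree $3$) together with the already-established \Cref{thm:pcover_elicit_lb_any}. The only point worth stating explicitly is the constant-factor slack in the logarithm, i.e. $\log(m-1), \log(m/2) \in \Theta(\log m)$, so that the bounds really are $\Omega(mn\log m)$ as claimed. I note in passing that these same two trees (the star having pathwidth one, the complete binary tree having pathwidth $\Theta(\log m)$) also serve for \Cref{cor:pwidth_elicit_lb}.
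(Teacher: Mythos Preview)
Your proposal is correct and follows essentially the same approach as the paper: apply \Cref{thm:pcover_elicit_lb_any} to a complete binary tree (which has $\Theta(m)$ leaves and maximum degree $3$) for the $\Delta=3$ case, and to a star (which has $m-1$ leaves) for the $\Delta=m-1$ case. The paper's proof is in fact terser than yours, so there is nothing to add.
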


\longversion{
\begin{proof}
 Using \Cref{thm:pcover_elicit_lb_any}, we know that any \PE algorithm for profiles which are single peaked on a complete binary tree has query complexity $\Omega(mn\log m)$ since a complete binary tree has $\Omega(m)$ leaves. The result now follows from the fact that the maximum degree $\Delta$ of a node is three for any binary tree. The case of $\Delta = m-1$ follows immediately from \Cref{thm:pcover_elicit_lb_any} applied on stars.
\end{proof}
}


\begin{corollary}\label{cor:dia_elicit_lb}
 There exists two trees \TT and $\TT^\pr$ with diameters $\omega=2$ and $\omega=\nfrac{m}{2}$ respectively such that any \PE algorithm for profiles which are single peaked on \TT and $\TT^\pr$ respectively has query complexity $\Omega(mn\log m)$.
\end{corollary}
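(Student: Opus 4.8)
The plan is to instantiate \Cref{thm:pcover_elicit_lb_any} on two concrete families of trees, one of diameter two and one of diameter $\nfrac{m}{2}$, each chosen so that it has $\Theta(m)$ leaves; then the $\Omega(n\el\log\el)$ bound of \Cref{thm:pcover_elicit_lb_any} becomes $\Omega(mn\log m)$ in both cases.

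For $\omega = 2$, I would let $\TT$ be the star on $m$ vertices: a center $c$ adjacent to each of the remaining $m-1$ vertices. Any two non-central vertices are at distance exactly two through $c$, so the diameter of $\TT$ is $2$, while the number of leaves is $\el = m-1 = \Omega(m)$. Substituting $\el = m-1$ into \Cref{thm:pcover_elicit_lb_any} immediately yields that any \PE algorithm for profiles single peaked on $\TT$ has query complexity $\Omega(n(m-1)\log(m-1)) = \Omega(mn\log m)$.

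For $\omega = \nfrac{m}{2}$, I would let $\TT^\pr$ be a caterpillar whose central path (spine) is a path $P = (v_1,\dots,v_p)$ on $p \approx \nfrac{m}{2}$ vertices, with the remaining $\approx \nfrac{m}{2}$ vertices attached as pendant leaves to internal vertices of $P$ (the exact placement is irrelevant — attaching them all to a single internal spine vertex already works and keeps the tree a caterpillar). The longest path in such a caterpillar runs essentially along the spine and has $\Theta(\nfrac{m}{2})$ edges, so after adjusting $p$ by an additive constant the diameter can be made exactly $\nfrac{m}{2}$; meanwhile the number of leaves is the $\approx \nfrac{m}{2}$ pendant vertices together with the two spine endpoints, i.e.\ $\el = \Omega(m)$. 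Applying \Cref{thm:pcover_elicit_lb_any} once more gives query complexity $\Omega(n\el\log\el) = \Omega(mn\log m)$ for profiles single peaked on $\TT^\pr$.

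The argument is essentially bookkeeping once \Cref{thm:pcover_elicit_lb_any} is available; the only point needing a little care is the caterpillar construction, where one must check that a vertex budget of $m$ simultaneously permits a spine long enough to force diameter $\nfrac{m}{2}$ and still leaves $\Omega(m)$ vertices free to serve as pendant leaves — which it does, since a spine of $\sim\nfrac{m}{2}$ vertices leaves $\sim\nfrac{m}{2}$ vertices to spare. Beyond this counting I do not anticipate any genuine obstacle.
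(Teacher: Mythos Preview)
Your proposal is correct and follows essentially the same approach as the paper: apply \Cref{thm:pcover_elicit_lb_any} to a star for $\omega=2$ and to a caterpillar with a central path of length $\nfrac{m}{2}$ for $\omega=\nfrac{m}{2}$. The paper's proof is a one-line invocation of exactly these two constructions, and your additional bookkeeping (verifying the leaf counts and the diameter of the caterpillar) fills in details the paper leaves implicit.
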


\longversion{
\begin{proof}
 The $\omega=2$ and $\omega=\nfrac{m}{2}$ cases follow from \Cref{thm:pcover_elicit_lb_any} applied on star and caterpillar graphs with a central path of length $\nfrac{m}{2}$ respectively.
\end{proof}
}


Our final result, which again follows from \Cref{thm:pcover_elicit_lb_any} applied of caterpillar graphs with a central path of length $m-d$, shows that the bound in~\Cref{thm:pdist_elicit_ub} is tight. 

\begin{theorem}\label{thm:pdist_elicit_lb}\shortversion{$[\star]$}
 For any integers $m$ and $d$ with $1\le d\le\nfrac{m}{4}$, there exists a tree \TT with distance $d$ from path such that any \PE algorithm for profiles which are single peaked on \TT has query complexity $\Omega(mn + nd\log d)$.
\end{theorem}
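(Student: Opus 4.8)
The plan is to instantiate the adversary (decision‑tree) argument of \Cref{thm:leaves_elicit_lb} and \Cref{thm:pcover_elicit_lb_any} on a carefully placed caterpillar. First I would fix the construction: let $\TT$ have spine $v_1 - v_2 - \cdots - v_{m-d}$ and attach $d$ pendant leaves $\ell_1,\dots,\ell_d$ all to the middle spine vertex $v_j$, where $j = \lceil (m-d)/2\rceil$; then $\TT$ has exactly $m$ vertices. (We may assume $m$ exceeds a fixed constant, since otherwise $\Omega(mn+nd\log d)$ is $\Omega(n)$ and trivial.) I would then verify that the distance of $\TT$ from a path is \emph{exactly} $d$: deleting the $d$ leaves leaves the spine, so the distance is at most $d$; conversely, if $S$ is a vertex set with $\TT-S$ a path and $|S|\le d$, then either $v_j\notin S$, in which case $v_j$ has $d+2$ neighbours in $\TT$ but at most $2$ in the path $\TT-S$, forcing $|S|\ge d$; or $v_j\in S$, in which case $\TT-v_j$ has at least $d+2\ge 3$ components (the two nonempty spine arms and the $d$ isolated leaves), so keeping $\TT-S$ connected forces $S$ to contain all but one of these components plus $v_j$, which costs strictly more than $d$ deletions and contradicts minimality. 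Hence the distance equals $d$.

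Next I would lower bound the number of preferences single peaked on $\TT$. Rooting $\TT$ at $v_j$, the breadth‑first layers are $L_0=\{v_j\}$, $L_1=\{v_{j-1},v_{j+1},\ell_1,\dots,\ell_d\}$ (of size $d+2$), $L_i=\{v_{j-i},v_{j+i}\}$ for $2\le i\le \min(j-1,\,m-d-j)$, and singleton layers beyond that. As established in the proofs of \Cref{thm:leaves_elicit_lb} and \Cref{thm:pcover_elicit_lb_any}, any linear order that lists $L_0,L_1,L_2,\dots$ in this order, with an arbitrary internal order inside each layer, is single peaked on $\TT$: along any path $\QQ$ of $\TT$ the vertex closest to $v_j$ is the least common ancestor of its endpoints, it is the unique top of the restricted preference, and since the distance to $v_j$ is unimodal along $\QQ$, the preference strictly decreases as one moves away from that vertex. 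Counting these orders gives at least $(d+2)!\cdot 2^{\min(j-1,\,m-d-j)-1}$ distinct preferences single peaked on $\TT$; since $d\le m/4$ forces $\min(j-1,m-d-j)=\Omega(m)$, this is $2^{\Omega(d\log d)}\cdot 2^{\Omega(m)}=2^{\Omega(m+d\log d)}$.

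Finally I would run the standard adversary argument verbatim from \Cref{thm:leaves_elicit_lb}: for each voter $v$ maintain the set $\RR_v$ of preferences single peaked on $\TT$ consistent with the answers so far, which starts of size $2^{\Omega(m+d\log d)}$; answer each query about $v$ with the majority side among $\RR_v$, halving $|\RR_v|$ in the worst case; since the algorithm must terminate with every $\RR_v$ a singleton, it makes $\Omega(\log|\RR_v|)=\Omega(m+d\log d)$ queries to each of the $n$ voters, for a total of $\Omega(mn+nd\log d)$. (The $\Omega(nd\log d)$ and $\Omega(mn)$ parts can alternatively be obtained separately, the former by applying \Cref{thm:pcover_elicit_lb_any} to this caterpillar, noting it has $\Theta(d)$ leaves and, by \Cref{lem:path_leaf}, path cover number $\Theta(d)$.)

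The step I expect to be the main obstacle is pinning down that the distance from a path is exactly $d$ — specifically the case $v_j\in S$, where one must rule out that deleting spine vertices is ever cheaper. Placing all $d$ leaves at the \emph{middle} spine vertex is exactly what forces both spine arms to have length $\Omega(m)>d$, which simultaneously makes any $v_j$‑deleting solution too expensive and supplies the $2^{\Omega(m)}$ layerwise degrees of freedom needed for the $\Omega(mn)$ term; everything else is a direct adaptation of arguments already in the paper.
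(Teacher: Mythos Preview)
Your proof is correct. It differs from the paper's in two respects worth noting. First, the construction: the paper takes a generic caterpillar with central path of length $m-d$ and does not pin down where the $d$ pendant leaves sit, nor does it verify that the distance from a path is \emph{exactly} $d$; you attach all $d$ leaves to the middle spine vertex and give a clean degree/connectivity argument that the distance is precisely $d$. Second, the lower-bound mechanism: the paper considers preferences of the form ``single-peaked order on the spine, followed by an arbitrary order on the $d$ leaves'' and then invokes two separate lower bounds --- Conitzer's $\Omega(m-d)$ bound for eliciting the spine part and the sorting lower bound $\Omega(d\log d)$ for the leaf part --- to conclude $\Omega(mn+nd\log d)$. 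You instead count BFS-layer orderings rooted at $v_j$ to get $2^{\Omega(m+d\log d)}$ admissible preferences and run the single adversary/decision-tree argument from \Cref{thm:leaves_elicit_lb} in one shot. Your route is more self-contained (it does not need the external citation for the $\Omega(mn)$ term) and more careful about the distance parameter; the paper's route is shorter and makes the two contributing terms visually transparent. Either is fine, and indeed your parenthetical remark about deriving the $\Omega(nd\log d)$ piece via \Cref{thm:pcover_elicit_lb_any} is exactly in the spirit of the paper's decomposition.
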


\longversion{
\begin{proof}
 Consider the caterpillar graph where the length of the central path \QQ is $m-d$; there exists such a caterpillar graph since $d\le\nfrac{m}{4}$. Consider the order $\succ = \pi \succ \sigma$ of the set of candidates \CC where $\pi$ is an order of the candidates in \QQ which is single peaked on \QQ and $\sigma$ is any order of the candidates in $\CC\setminus\QQ$. Clearly, $\succ$ is single peaked on the tree \TT. Any elicitation algorithm \AA needs to make $\Omega(m-d)$ queries involving only the candidates in \QQ to elicit $\pi$ due to \cite{Conitzer09} and $\Omega(d\log d)$ queries to elicit $\sigma$ due to sorting lower bound for every voter. This proves the statement.
\end{proof}
}

\section{\CW}\label{sec:cw}

We now show that we can find a weak Condorcet winner of profiles that are single peaked on trees using fewer queries than the number of queries needed to find the profile itself. First, note that if a Condorcet winner is guaranteed to exist, then it can be found using $\BigO(mn)$ queries --- we pit an arbitrary pair of candidates $x,y$ and use $\BigO(n)$ queries to determine if $x$ defeats $y$. We push the winning candidate forward and repeat the procedure, clearly requiring at most $m$ rounds. Now, if a profile is single peaked with respect to a tree, and there are an odd number of voters, then we have a Condorcet winner and the procedure that we just described would work. Otherwise, we simply find a Condorcet winner among the first $(n-1)$ voters. It can be shown that such a winner is one of the weak Condorcet winners for the overall profile, and we therefore have the following upper bound. 

\longversion{
We begin with the following general observation.

\begin{observation}\label{obs:cond_trivial_ub}
 Let \PP be a profile where a Condorcet winner is guaranteed to exist. Then we can find the Condorcet winner of \PP by making $\BigO(mn)$ queries.
\end{observation}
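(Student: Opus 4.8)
The plan is to use a standard single-elimination sweep over the candidates, maintaining a running ``champion'' and updating it whenever it loses a pairwise contest. Concretely, I would order the candidates arbitrarily as $c_1, \ldots, c_m$, initialize the champion to $c_1$, and then for each $i$ from $2$ to $m$ compare the current champion $x$ against $c_i$: query all $n$ voters on the pair $\{x, c_i\}$ (that is, call $\Query(x \succ_\ell c_i)$ for every $\ell\in[n]$), count how many voters prefer $c_i$ over $x$, and if this count exceeds $\nfrac{n}{2}$ replace the champion by $c_i$, otherwise keep $x$. At the end, output the champion.

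For correctness, let $w$ be a Condorcet winner of $\PP$, which exists by hypothesis; by definition a strict majority of voters prefer $w$ to every other candidate. I would argue by a short induction on the rounds that once $w$ becomes the champion it stays the champion, and that it does become the champion. If $w = c_1$, it is the champion from the start and is never displaced, since the only way the champion changes is by being defeated by a strict majority, and no candidate defeats $w$. If $w = c_j$ with $j \ge 2$, then in round $j$ the candidate $c_j = w$ is compared against the then-current champion and wins (as $w$ beats everyone), so $w$ becomes champion; in every subsequent round $w$ is the incumbent and again wins, so it remains champion through round $m$. Hence the final champion equals $w$, the Condorcet winner.

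For the query complexity, the algorithm performs exactly $m-1$ rounds, and each round issues at most $n$ queries (one per voter for a single pair), for a total of at most $(m-1)n = \BigO(mn)$ queries; the pairs queried across rounds are in fact pairwise distinct, since $c_i$ participates only in round $i$, so this also bounds the number of distinct query tuples. The only point that needs care is the correctness argument: in a general tournament the champion after such a sweep need not beat all previously eliminated candidates, so the proof genuinely relies on the guaranteed existence of a Condorcet winner to ensure that $w$ is never displaced once it takes over. This is the crux of the observation, but it follows immediately from the definition of a Condorcet winner, so I do not anticipate any real difficulty.
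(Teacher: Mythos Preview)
Your proof is correct and is essentially the same approach as the paper's: both run a knockout over the candidates, spending $\BigO(n)$ queries per pairwise comparison to eliminate a non-Condorcet candidate in each of $m-1$ rounds, and both rely on the fact that the guaranteed Condorcet winner can never be eliminated. The only cosmetic difference is that the paper phrases it as shrinking a set \SS of surviving candidates by picking arbitrary pairs, whereas you process candidates in a fixed order with a running champion; the argument and the $\BigO(mn)$ bound are identical.
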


\begin{proof}
 For any two candidates $x, y\in\CC$ we find whether $x$ defeats $y$ or not by simply asking all the voters to compare $x$ and $y$; this takes $\BigO(n)$ queries. The algorithms maintains a set \SS of candidates which are {\em potential} Condorcet winners. We initialize \SS to \CC. In each iteration we pick any two candidates $x, y\in\SS$ from \SS, remove $x$ from \SS if $x$ does not defeat $y$ and vice versa using $\BigO(n)$ query complexity until \SS is singleton. After at most $m-1$ iterations, the set \SS will be singleton and contain only Condorcet winner since we find a candidate which is not a Condorcet winner in every iteration and thus the size of the set \SS decreases by at least one in every iteration. This gives a query complexity bound of $\BigO(mn)$.
\end{proof}

Using \Cref{obs:cond_trivial_ub} we now develop a \CW algorithm with query complexity $\BigO(mn)$ for profiles that are single peaked on trees.
}

\begin{theorem}\label{thm:cw_gen_ub}\shortversion{$[\star]$}
 There is a \CW algorithm with query complexity $\BigO(mn)$ for single peaked profiles on trees.
\end{theorem}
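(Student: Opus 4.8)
The plan is to reduce the weak-Condorcet-winner problem to the Condorcet-winner problem, which we already know how to solve efficiently by \Cref{obs:cond_trivial_ub}. The key structural fact to establish is that, for a profile \PP on $n$ voters that is single peaked on a tree \TT, if we drop one voter to obtain a profile $\PP^\pr$ on $n-1$ voters, then (i) $\PP^\pr$ is still single peaked on \TT, (ii) $\PP^\pr$ has an \emph{odd} number of voters when $n$ is even and hence is guaranteed to have a (strict) Condorcet winner — here I would invoke the stated fact from~\cite{demange1982single} that profiles single peaked on a tree always admit a weak Condorcet winner, and note that with an odd number of voters a weak Condorcet winner is automatically a strict Condorcet winner since no pairwise tie is possible — and (iii) any Condorcet winner of $\PP^\pr$ is a weak Condorcet winner of \PP. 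Fact (iii) is the heart of the argument: if $x$ is a Condorcet winner of the first $n-1$ voters, then for every $y \ne x$ at least $\lceil n/2 \rceil$ of those $n-1$ voters prefer $x$ to $y$, so at most $\lfloor n/2 \rfloor - 1 < n/2$ of them prefer $y$ to $x$; even if the $n$-th voter prefers $y$ to $x$, the total number preferring $y$ to $x$ is at most $\lfloor n/2 \rfloor \le n/2$, which is exactly the condition for $x$ to be a weak Condorcet winner of \PP.

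With this in hand, the algorithm is immediate. If $n$ is odd, \PP itself has a strict Condorcet winner (a weak Condorcet winner with an odd electorate), so we run the procedure of \Cref{obs:cond_trivial_ub} directly on \PP, using $\BigO(mn)$ queries, and output the result. If $n$ is even, we run the procedure of \Cref{obs:cond_trivial_ub} on the profile $\PP^\pr$ consisting of voters $v_1,\dots,v_{n-1}$, which has an odd number of voters and hence a Condorcet winner; this again uses $\BigO(m(n-1)) = \BigO(mn)$ queries, and by fact (iii) the winner returned is a weak Condorcet winner of \PP. In both cases the query complexity is $\BigO(mn)$, as claimed. Note that we never need the $n$-th voter's preferences at all in the even case.

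The only real obstacle is making sure the parity bookkeeping in fact (iii) is airtight — in particular that the inequality comes out as $\le n/2$ and not a strict inequality in the wrong direction, and that the guaranteed existence of a \emph{weak} Condorcet winner on a single-peaked-on-tree profile correctly upgrades to a \emph{strict} one under an odd electorate. Everything else is a direct invocation of \Cref{obs:cond_trivial_ub}. I would also remark that we do not need a path cover or the tree structure beyond the existence guarantee of~\cite{demange1982single}; the tree hypothesis is used solely to certify that $\PP^\pr$ (and \PP, when $n$ is odd) has the required Condorcet winner so that the routine of \Cref{obs:cond_trivial_ub} applies.
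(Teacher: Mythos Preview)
Your proposal is correct and follows essentially the same approach as the paper: split on the parity of $n$, apply \Cref{obs:cond_trivial_ub} directly when $n$ is odd, and when $n$ is even drop one voter, find the Condorcet winner of the remaining odd-sized profile, and argue it is a weak Condorcet winner of the full profile. The only cosmetic difference is that the paper phrases fact~(iii) as a contradiction via margin parity (an even electorate forces any strict defeat to have margin at least two, so it survives the removal of one voter), whereas you do the equivalent direct count; the content is identical.
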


For the special case of single peaked profiles, we can do even better. Here we take advantage of the fact that a ``median candidate''~\cite{mas1995microeconomic}  is guaranteed to be a weak Condorcet winner. We make $\OO(\log m)$ queries per vote to find the candidates placed at the first position of all the votes using the algorithm in \cite{Conitzer09} and find a median candidate to show the following.

\longversion{ if a profile is single peaked (on a path), then there is a \CW algorithm with query complexity $\BigO(n\log m)$ as shown below. Let us define the frequency $f(x)$ of a candidate $x\in\CC$ to be the number of votes where $x$ is placed at the first position. Then we know that a median candidate according to the single peaked ordering of the candidates along with their frequencies as defined above is a weak Condorcet winner for single peaked profiles \cite{mas1995microeconomic}.}

\begin{theorem}\label{thm:con_sp_ub}\shortversion{$[\star]$}
 There is a \CW algorithm with query complexity $\BigO(n\log m)$ for single peaked profiles (on a path).
\end{theorem}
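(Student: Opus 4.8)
The plan is to exploit the classical median voter theorem: if a profile is single peaked with respect to a harmonious order $\succ^\pr = (c_1, c_2, \ldots, c_m)$, then a (weighted) median of the voters' peaks is a weak Condorcet winner~\cite{mas1995microeconomic}. Consequently the algorithm does not need to recover the votes at all; it suffices to learn, for each voter, the position along $\succ^\pr$ of her peak (top candidate), to tally these positions into frequencies, and to return a median position. The total cost will then be $n$ times the per-voter cost of locating a peak.

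First I would argue that the peak of a single peaked vote can be found with $\BigO(\log m)$ queries. Fix a voter and maintain an interval $[\ell, r]$ of positions guaranteed to contain the peak, starting from $[1, m]$. The restriction of the vote to the contiguous block $c_\ell, \ldots, c_r$ is single peaked on that sub-path, so a single query comparing $c_p$ with $c_{p+1}$ for the midpoint $p = \lfloor (\ell+r)/2 \rfloor$ reveals the side of the cut on which the peak lies: if $c_p$ is preferred then the peak is in $[\ell, p]$, otherwise it is in $[p+1, r]$ (this follows directly from the single peakedness condition applied to $c_p$, $c_{p+1}$, and the peak). After $\BigO(\log m)$ such queries the interval collapses to the peak; equivalently, one may invoke the peak-finding subroutine of~\cite{Conitzer09} directly. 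Note that all these queries are asked one voter at a time, so the algorithm never interleaves queries across voters, which is exactly the regime in which the companion lower bound~\Cref{thm:con_sp_lb} applies.

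Having computed all peaks, let $f(c_j)$ be the number of voters whose peak is $c_j$, and pick an index $j^\star$ with $\sum_{i \le j^\star} f(c_i) \ge \nfrac{n}{2}$ and $\sum_{i \ge j^\star} f(c_i) \ge \nfrac{n}{2}$; such an index exists. I would then verify directly that $c_{j^\star}$ is a weak Condorcet winner. Take any other candidate $y = c_{j'}$; by symmetry assume $j' > j^\star$. If a voter's peak sits at a position $p \le j^\star$, then either $p = j^\star$ (so $c_{j^\star}$ is the peak) or $c_p \succ^\pr c_{j^\star} \succ^\pr c_{j'}$, and in both cases single peakedness forces $c_{j^\star} \succ y$. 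Hence at least $\sum_{i \le j^\star} f(c_i) \ge \nfrac{n}{2}$ voters prefer $c_{j^\star}$ to $y$, so at most $\nfrac{n}{2}$ prefer $y$ to $c_{j^\star}$; the case $j' < j^\star$ is symmetric, using $\sum_{i \ge j^\star} f(c_i) \ge \nfrac{n}{2}$. Since tallying the peaks and extracting the median need no further queries, the overall query complexity is $n \cdot \BigO(\log m) = \BigO(n\log m)$.

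The binary search and the bookkeeping are routine; the only substantive ingredient is the median voter fact, which we import from~\cite{mas1995microeconomic}, so the main thing to get right is that the \emph{weak} (rather than strict) notion of Condorcet winner is precisely what the weighted median delivers, and that ties in the peaks and in the partial sums are absorbed by the $\le \nfrac{n}{2}$ definition. I do not expect a genuine obstacle beyond stating these points carefully.
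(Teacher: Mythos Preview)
Your proposal is correct and follows essentially the same approach as the paper: find each voter's peak with $\BigO(\log m)$ queries (the paper cites the peak-finding routine of~\cite{Conitzer09} rather than spelling out the binary search) and then output a median of the peaks, invoking the median voter theorem~\cite{mas1995microeconomic}. You additionally unpack the median argument to verify the \emph{weak} Condorcet property explicitly, which the paper simply imports from the reference.
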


\longversion{
\begin{proof}
 Let \PP be a profile that is single peaked with respect to an ordering $\succ\in\LL(\CC)$ of candidates. Then we find, for every voter $v$, the candidate the voter $v$ places at the first position using $\BigO(\log m)$ queries using the algorithm in \cite{Conitzer09} and return a median candidate.
\end{proof}
}

\longversion{
\begin{proof}
 Let $\PP = (\succ_i)_{i\in[n]}$ be a profile which is single peaked on a tree \TT. If $n$ is an odd integer, then we know that there exists a Condorcet winner in \PP since no two candidates can tie and there always exists at least one weak Condorcet winner in every single peaked profile on trees. Hence, if $n$ is an odd integer, then we use \Cref{obs:cond_trivial_ub} to find a weak Condorcet winner which is the Condorcet winner too. Hence let us now assume that $n$ is an even integer. Notice that $\PP_{-1} = (\succ_2, \ldots, \succ_n)$ is also single peaked on \TT and has an odd number of voters and thus has a Condorcet winner. We use \Cref{obs:cond_trivial_ub} to find the Condorcet winner $c$ of $\PP_{-1}$ and output $c$ as a weak Condorcet winner of \PP. We claim that $c$ is a weak Condorcet winner of \PP. Indeed otherwise there exists a candidate $x$ other than $c$ who defeats $c$ in \PP. Since $n$ is an even integer, $x$ must defeat $c$ by a margin of at least two (since all pairwise margins are even integers) in \PP. But then $x$ also defeats $c$ by a margin of at least one in $\PP_{-1}$. This contradicts the fact that $c$ is the Condorcet winner of $\PP_{-1}$.
\end{proof}
}
The next result uses~\Cref{thm:con_sp_ub} on paths in a path cover eliminating the case of even number of voters by the idea of setting aside one voter that was used in~\Cref{thm:cw_gen_ub}.

\begin{theorem}\label{thm:cw_pc_ub}\shortversion{$[\star]$}
 Let \TT be a tree with path cover number at most $k$. Then there is an algorithm for \CW for profiles which are single peaked on \TT with query complexity $\BigO(nk\log m)$.
\end{theorem}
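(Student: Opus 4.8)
The plan is to chop $\TT$ into at most $k$ disjoint paths, solve \CW on each path-restricted profile using the path algorithm of~\Cref{thm:con_sp_ub}, and then reconcile the (at most $k$) answers with a tiny tournament. Concretely, first compute a path cover $P_1 = (\CC_1,\EE_1),\ldots,P_t=(\CC_t,\EE_t)$ of $\TT$ with $t\le k$ (this is polynomial-time and uses no queries). Before anything else, dispose of parity exactly as in the proof of~\Cref{thm:cw_gen_ub}: if $n$ is even, set aside voter~$1$ and work with $\PP' = (\succ_2,\ldots,\succ_n)$, which is still single peaked on $\TT$ and has an odd number of voters; since every pairwise margin in $\PP$ is even, the (strict) Condorcet winner of $\PP'$ --- which exists because $\PP'$ is single peaked on a tree with oddly many voters --- is a weak Condorcet winner of $\PP$. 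So it suffices to find the Condorcet winner of a profile single peaked on $\TT$ with an odd number of voters; rename this profile $\PP$ and write $c^\ast$ for its Condorcet winner.

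The correctness hinge is the following claim: the path $P_j$ of the cover that contains $c^\ast$ has $c^\ast$ as its \emph{unique} weak Condorcet winner in $\PP(\CC_j)$. This holds because restricting a profile to a subset of candidates preserves all pairwise majority relations, so $c^\ast$ strictly beats every other candidate of $\CC_j$ in $\PP(\CC_j)$ too; with an odd electorate there are no ties, hence $c^\ast$ is the only candidate not strictly beaten by someone. Since $P_j$ is a path in $\TT$, the profile $\PP(\CC_j)$ is single peaked with respect to the order induced by $P_j$, so~\Cref{thm:con_sp_ub} applied to $\PP(\CC_j)$ returns exactly $c^\ast$. On every other path $P_i$ the same theorem returns \emph{some} weak Condorcet winner $w_i$ of $\PP(\CC_i)$; its identity is irrelevant. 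Thus $W = \{w_1,\ldots,w_t\}$ contains $c^\ast$, and because $c^\ast$ is the Condorcet winner of the whole profile it strictly defeats every other member of $W$. Now run the obvious tournament on $W$: repeatedly pick two surviving candidates, query all $n$ voters on that pair, discard one who fails to get a strict majority, until one candidate remains. Since $c^\ast$ never loses a head-to-head, the survivor is $c^\ast$.

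For the query complexity: running~\Cref{thm:con_sp_ub} on $P_i$ costs $\BigO(n\log|\CC_i|) = \BigO(n\log m)$ queries, and queries issued on distinct paths involve disjoint candidate pairs, so summing over the $t\le k$ paths gives $\BigO(nk\log m)$; the final tournament on $W$ uses at most $t-1\le k-1$ head-to-heads of $\BigO(n)$ queries each, i.e.\ $\BigO(nk)$; and the even-$n$ case only replaces $n$ by $n-1$. Hence the total is $\BigO(nk\log m)$.

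The step I expect to be the main obstacle is precisely the uniqueness part of the claim in the second paragraph: it is what ensures that the candidate handed back by~\Cref{thm:con_sp_ub} on the "right" path is $c^\ast$ itself rather than some other weak Condorcet winner of $\PP(\CC_j)$ that could then fail to emerge from the global tournament. This is exactly where the reduction to an odd number of voters (imported from~\Cref{thm:cw_gen_ub}) is doing real work, since over an even electorate $\PP(\CC_j)$ may have several weak Condorcet winners and the path subroutine could return the wrong one.
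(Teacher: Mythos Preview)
Your proposal is correct and follows essentially the same approach as the paper: reduce to an odd number of voters via the set-aside-one-voter trick, apply \Cref{thm:con_sp_ub} on each of the $t\le k$ paths of a path cover, and then run a knockout tournament on the resulting representatives. The only cosmetic differences are that the paper invokes \Cref{thm:cw_gen_ub} for the second stage (which is precisely your tournament) and bookkeeps the first-stage cost a bit more tightly as $\BigO\!\bigl(n\sum_{i}\log|\CC_i|\bigr)=\BigO(nt\log(m/t))$ before arriving at the same $\BigO(nk\log m)$ total.
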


Recalling that the number of leaves bounds the path cover number, we have the following consequence. 

\begin{corollary}\label{cor:cw_leaf_ub}
 Let \TT be a tree with \el leaves. Then there is an algorithm for \CW for profiles which are single peaked on \TT with query complexity $\BigO(n\el\log m)$.
\end{corollary}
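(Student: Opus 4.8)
The plan is to reduce directly to Theorem \ref{thm:cw_pc_ub} via the structural relationship between the number of leaves and the path cover number. First I would recall the bound $\lfloor \el/2\rfloor \le k \le \el$ established earlier (Lemma \ref{lem:path_leaf}, via the partitioning argument of Lemma \ref{lem:path_decom}): any tree with $\el$ leaves can be partitioned into at most $\el$ disjoint paths. Hence a tree \TT with $\el$ leaves has path cover number $k \le \el$. Applying Theorem \ref{thm:cw_pc_ub} with this bound on $k$ immediately yields a \CW algorithm for profiles single peaked on \TT with query complexity $\BigO(nk\log m) = \BigO(n\el\log m)$, which is exactly the claimed bound.

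Concretely, the algorithm would: (i) compute a path cover $\QQ = \{Q_1, \ldots, Q_t\}$ of \TT with $t \le \el$ disjoint paths in polynomial time (the marking-scheme algorithm underlying Lemma \ref{lem:path_decom}); (ii) run the \CW procedure of Theorem \ref{thm:cw_pc_ub} on this cover — that is, if the number of voters $n$ is even, set aside one voter so that a Condorcet winner is guaranteed among the remaining $n-1$ voters (as in Theorem \ref{thm:cw_gen_ub}), then for each path $Q_i$ use the single-peaked \CW subroutine of Theorem \ref{thm:con_sp_ub} restricted to $Q_i$ — each such subroutine costs $\BigO(\log m)$ queries per voter since each $Q_i$ is a path and \PP restricted to the candidates of $Q_i$ is single peaked on it — and finally combine the $t$ candidate-positions-at-first-place tallies to identify a median candidate; (iii) verify the returned candidate is a weak Condorcet winner of the full profile using the parity argument from Theorem \ref{thm:cw_gen_ub} (every pairwise margin over $n-1$ votes being off by at most one from the margin over $n$ votes, and even margins forcing ties rather than defeats). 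Summing over the $t \le \el$ paths, the total query complexity is $t \cdot \BigO(n\log m) = \BigO(n\el\log m)$.

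There is essentially no genuine obstacle here: the corollary is a direct specialization, and all the work has been done in Lemma \ref{lem:path_leaf} and Theorem \ref{thm:cw_pc_ub}. The only point requiring minor care is that the bound $k \le \el$ (rather than the weaker $k \le \el$ only up to constants) is what we need, and this follows cleanly from Lemma \ref{lem:path_decom} since that construction produces at most $\el$ paths exactly. One should also note that the restriction of a tree-single-peaked profile to the candidate set of a single path $Q_i$ is single peaked on that path — this is immediate from the definition of single peakedness on a tree, which requires single peakedness along the order induced by every path — so Theorem \ref{thm:con_sp_ub} applies verbatim to each piece.
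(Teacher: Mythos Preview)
Your reduction to Theorem~\ref{thm:cw_pc_ub} via the bound $k \le \el$ from Lemma~\ref{lem:path_leaf} is exactly the paper's argument; the corollary is stated there as an immediate consequence with no further proof, so your first paragraph alone already suffices.

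One small correction to your concrete expansion in the second paragraph: the second stage of Theorem~\ref{thm:cw_pc_ub} does not ``combine the $t$ candidate-positions-at-first-place tallies to identify a median candidate'' --- there is no meaningful global median across disjoint paths of \TT. Rather, after finding the per-path Condorcet winners $w_1,\ldots,w_t$ via Theorem~\ref{thm:con_sp_ub}, the algorithm finds the Condorcet winner of the profile restricted to $\{w_1,\ldots,w_t\}$ using the pairwise-elimination procedure of Theorem~\ref{thm:cw_gen_ub}, at cost $\BigO(nt)$ since there are only $t$ candidates in play. This does not affect the correctness of the corollary itself, which only needs the black-box invocation of Theorem~\ref{thm:cw_pc_ub}.
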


We now state the lower bounds pertaining to \CW. First, we show that any algorithm for single peaked profiles on stars has query complexity $\Omega(mn)$, showing that the bound of \Cref{thm:cw_gen_ub} is tight. 

\begin{theorem}\label{thm:cw_gen_lb}
 Any \CW algorithm for single peaked profiles on stars must have query complexity $\Omega(mn)$.
\end{theorem}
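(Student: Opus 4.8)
The plan is a decision-tree / adversary argument on a star $\TT$ with center $c$ and leaves $\ell_1,\dots,\ell_{m-1}$. I would first record the relevant structure. Along the path through $\ell_i$, $c$, $\ell_j$, single-peakedness forces any vote whose top candidate is a leaf $\ell$ to rank $c$ second. Hence the only voters who prefer a leaf $\ell$ to $c$ are those whose top choice is $\ell$, so $\ell$ pairwise-defeats $c$ iff $\ell$ is the top choice of a strict majority of voters; and when no leaf is the top choice of a strict majority, $c$ is never pairwise-defeated and is therefore a weak Condorcet winner (whereas if $\ell_j$ is the top choice of a strict majority, $\ell_j$ is the Condorcet winner and $c$ is not a weak Condorcet winner). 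So finding a weak Condorcet winner on a star amounts to detecting whether some leaf is a majority-top, and I will show this needs $\Omega(mn)$ queries.

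Let $u_0 := c \succ \ell_1 \succ \ell_2 \succ \cdots \succ \ell_{m-1}$ and let $\PP_0$ be the profile in which every voter casts $u_0$; its unique weak Condorcet winner is $c$. The adversary answers every query $\text{\Query}(x \succ_v y)$ according to $u_0$. Since $\PP_0$ is consistent with all answers, a correct algorithm must output $c$, so the remaining task is to force $\Omega(mn)$ queries before it can safely do so. For each leaf $\ell_j$ let $u_j$ be the vote obtained from $u_0$ by moving $\ell_j$ to the top, i.e. $u_j := \ell_j \succ c \succ \ell_1 \succ \cdots \succ \ell_{j-1} \succ \ell_{j+1} \succ \cdots \succ \ell_{m-1}$; this vote has peak $\ell_j$, ranks $c$ second, and is single peaked on $\TT$. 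A direct comparison shows that $u_j$ and $u_0$ disagree exactly on the $j$ pairs in $B_j := \{\{c,\ell_j\}\} \cup \{\{\ell_a,\ell_j\} : a < j\}$. Consequently a voter $v$ can be ``switched'' to cast $u_j$ while remaining consistent with every answer given iff $v$ was never queried about a pair in $B_j$.

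The crux is a charging argument. Every candidate pair lies in $B_j$ for at most one index $j$: the pair $\{c,\ell_a\}$ lies only in $B_a$, and the pair $\{\ell_a,\ell_b\}$ with $a<b$ lies only in $B_b$. Thus each query made by the algorithm blocks the switch of its voter to at most one leaf. If the algorithm makes fewer than $(m-1)\lceil n/2 \rceil$ queries, then summing over the $m-1$ leaves there is a leaf $\ell_j$ for which fewer than $\lceil n/2 \rceil$ voters have been queried about any pair of $B_j$, so strictly more than $n/2$ voters can be switched to $u_j$ simultaneously. Switching enough of them yields a profile $\PP'$ that is single peaked on $\TT$, is consistent with every answer of the adversary, and in which $\ell_j$ is the top choice of a strict majority of voters --- so $c$ is not a weak Condorcet winner of $\PP'$, contradicting correctness of the ($c$-outputting) algorithm. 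Hence every correct \CW algorithm on stars makes $\Omega(mn)$ queries.

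I expect the main obstacle to be getting the bookkeeping around the sets $B_j$ exactly right: checking that passing from $u_0$ to $u_j$ leaves the relative order of all leaves other than $\ell_j$ untouched (so that the disagreement set is precisely $B_j$), that every queried pair is charged to at most one $B_j$, and that the switched profile $\PP'$ is genuinely single peaked on $\TT$ --- this last point is exactly why the gadget vote $u_j$ must place $c$ in second position rather than being an arbitrary leaf-topped ordering.
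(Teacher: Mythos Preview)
Your argument is correct and takes a genuinely different route from the paper's. The paper uses an \emph{adaptive} oracle that maintains, for each voter, a set of ``marked'' candidates (ineligible to be that voter's top) and, upon each query, marks at most one new candidate; it then argues by a double-pigeonhole that unless $\Omega(n)$ voters are each queried $\Omega(m)$ times, two leaves $x,y$ remain simultaneously unmarked for a strict majority of voters, so whichever of $x,y$ the algorithm fails to output can be made the unique Condorcet winner. Your argument instead fixes a single \emph{oblivious} profile $\PP_0$ (everyone casts $u_0$), observes that the disagreement set $B_j$ between $u_0$ and the leaf-topped vote $u_j$ has the property that each candidate pair lies in at most one $B_j$, and uses this to charge each query to at most one leaf. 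This is cleaner: the adversary is non-adaptive, only one adversarial candidate is needed (not two), and you get the sharper explicit bound $(m-1)\lceil n/2\rceil$ rather than the paper's $\frac{n}{10}\cdot\frac{m}{4}$. The paper's adaptive-marking strategy is more general-purpose machinery, but here it buys nothing over your direct charging argument; your approach exploits the star structure more tightly by noting that moving a single leaf $\ell_j$ to the top of $u_0$ perturbs only the $j$ pairs in $B_j$, and that these sets $B_j$ partition the set of all candidate pairs.
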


\begin{proof}
 Let \TT be a star with center vertex $c$. We now design an oracle that will ``force'' any \CW algorithm \AA for single peaked profiles on \TT to make $\Omega(mn)$ queries. For every voter $v$, the oracle maintains a set of {\em ``marked''} candidates which can not be placed at the first position of the preference of $v$. Suppose the oracle receives a query to compare two candidates $x$ and $y$ for a voter \el. If the order between $x$ and $y$ for the voter \el follows from the answers the oracle has already provided to all the queries for the voter \el, then the oracle answers accordingly. Otherwise it answers $x\succ_\el y$ if $y$ is unmarked and marks $y$; otherwise the oracle answers $y\suc_\el x$ and marks $x$. Notice that the oracle marks at most one unmarked candidate every time it is queried. We now claim that there must be at least $\nfrac{n}{10}$ votes which have been queried at least $\nfrac{m}{4}$ times. If not, then there exists $n-\nfrac{n}{10} = \nfrac{9n}{10}$ votes each of which has at least $m-\nfrac{m}{4} = \nfrac{3m}{4}$ candidates unmarked. In such a scenario, there exists a constant $N_0$ such that for every $m, n>N_0$, we have at least two candidates $x$ and $y$ who are unmarked in at least $(\lfloor\nfrac{n}{2}\rfloor + 1)$ votes each. Now if the algorithm outputs $x$, then we put $y$ at the first position in at least $(\lfloor\nfrac{n}{2}\rfloor + 1)$ votes and at the second position in the rest of the votes and this makes $y$ the (unique) Condorcet winner. If the algorithm does not output $x$, then we put $x$ at the first position in at least $(\lfloor\nfrac{n}{2}\rfloor + 1)$ votes and at the second position in the rest of the votes and this makes $x$ the (unique) Condorcet winner. Hence the algorithm fails to output correctly in both the cases contradicting the correctness of the algorithm. Also the resulting profile is single peaked on \TT with center at $y$ in the first case and at $x$ in the second case. Therefore the algorithm \AA must have query complexity $\Omega(mn)$.
\end{proof}

Our concluding result uses an intricate adversary argument, and shows that the query complexity for \CW for single peaked profiles in \Cref{thm:con_sp_ub} is essentially optimal, provided that the queries to different voters are not interleaved, as is the case with our algorithm.

\begin{theorem}\label{thm:con_sp_lb}
 Any \CW algorithm for single peaked profiles which does not interleave the queries to different voters has query complexity $\Omega(n\log m)$.
\end{theorem}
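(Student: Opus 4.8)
The plan is to prove this by an adversary argument tailored to the no-interleaving restriction. First I would recall the standard fact that, for a profile single peaked with respect to a path $c_1\succ^\pr\cdots\succ^\pr c_m$, the (for $n$ odd, unique) weak Condorcet winner is the \emph{median} of the multiset of the voters' peaks (their top candidates) under the order $\succ^\pr$; so it suffices to force any non-interleaving algorithm to make $\Omega(n\log m)$ queries before this median is pinned down. Take $n$ odd, and have the adversary maintain, for every voter $v$, a contiguous set $I_v=[a_v,b_v]$ of candidates still consistent with being $v$'s peak, initialised to all of $\{c_1,\dots,c_m\}$. The key combinatorial observation is that a query $\Query(c_s\succ_v c_t)$ with $s<t$ is \emph{forcing} only when both $c_s$ and $c_t$ lie strictly inside $I_v$, and in that case the adversary may answer so that $I_v$ shrinks either to its part with index $\ge s+1$ or to its part with index $\le t-1$ (while any non-forcing query is answered with no shrinkage at all); retaining the larger part on each forcing query keeps $|I_v|\ge m/2^{q_v}$ after $q_v$ queries to $v$.

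Because the queries are not interleaved, the adversary processes the voters in blocks in the order $w_1,\dots,w_n$ chosen by the algorithm, and decides how to shrink $I_{w_i}$ while $w_i$ is handled, with no option to revisit it. The heart of the construction is to pick these shrinkage decisions so as to maintain the invariant that the set of candidates still possible as weak Condorcet winner has size at least two; by the median characterisation this set is exactly the interval of candidates from index $\mathrm{med}(a_\bullet)$ to index $\mathrm{med}(b_\bullet)$ (an intermediate-value argument, moving peaks one candidate at a time inside their intervals, shows every candidate in between is also achievable), so the invariant reads $\mathrm{med}(a_\bullet)<\mathrm{med}(b_\bullet)$. I would maintain a threshold index $g$ (re-chosen as the construction unfolds) together with an online, near-balanced assignment of each voter --- made at its first query --- to a ``left'' group kept with $a_v\le g$ and a ``right'' group kept with $b_v\ge g+1$; an as-yet-untouched voter lies in both groups, and one checks that as long as even one such voter remains the two medians straddle $g$ and the algorithm cannot have terminated correctly. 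The subtle point that produces the $\log m$ factor rather than merely $\Omega(n)$ is that, once every voter has been touched and this ``free slack'' is gone, the only way the algorithm can drive the two medians onto a common value is to push the relevant endpoint of a linear number of voters all the way across $g$, and by cutting the far side of those intervals the adversary forces $|I_v|$ down to a tiny window, which costs $\Omega(\log m)$ queries each.

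The step I expect to be the main obstacle is making that last point rigorous against a fully adaptive algorithm. The danger is exactly that the median is insensitive to all but the voters near $g$, so a naive adversary that merely keeps intervals large can be beaten in $O(n)$ queries by cheaply shoving the ``easy'' side of many voters past $g$ while pinning only a handful of voters exactly; the adversary must therefore coordinate the choice of $g$, the left/right assignment, and the per-query cut so that \emph{every} move making measurable progress toward a common median value can be charged against some voter's $\Theta(\log m)$ budget, and all other moves answered for free. I would package this as a potential-function argument: a potential that is $\Theta(n\log m)$ initially, drops by $O(1)$ per query under the adversary's play, and is bounded below by a constant times the number of candidates still possible as winner, so that correctness forces it to zero and hence forces $\Omega(n\log m)$ queries. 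The matching upper bound of \Cref{thm:con_sp_ub}, whose algorithm is itself non-interleaving, then shows this is tight.
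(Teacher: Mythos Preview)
Your framework---the adversary maintaining an interval $I_v=[a_v,b_v]$ of possible peaks for each voter, the observation that each forcing query halves $I_v$ at worst, and the median characterisation of the weak Condorcet winner---is correct and matches the paper. Where you diverge, and where the gap lies, is in how you extract the $\log m$ factor.

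You try to maintain a global threshold $g$ and a left/right assignment throughout the run, and then argue that after every voter has been ``touched'' the algorithm must still push many endpoints across $g$. But under the no-interleaving restriction, ``touched'' and ``fully processed'' coincide: once the algorithm has made its first query to $w_n$, it has already made its \emph{last} query to $w_1,\dots,w_{n-1}$, so there is no further pushing to be done. Your paragraph about what happens ``once every voter has been touched and this free slack is gone'' is therefore describing a phase that does not exist, and your admitted obstacle (packaging this as a potential argument) is a symptom of that mis-framing rather than a detail to be filled in.

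The paper's argument is both simpler and exploits the no-interleaving restriction more directly. It does \emph{not} fix a threshold in advance. Instead it argues that each of the first $\lfloor n/5\rfloor$ voters (in processing order) must individually receive $\Omega(\log m)$ queries. Suppose some such voter $v'$ is left with $|I_{v'}|\ge 2$, say $\{c_t,c_{t+1}\}\subseteq I_{v'}$, at the moment the algorithm moves on from $v'$. At that moment at least $\lceil 4n/5\rceil$ later voters are completely untouched, so their peaks are entirely free; the adversary now places $\lfloor n/2\rfloor$ of the peaks of $\VV\setminus\{v'\}$ strictly left of $c_t$ and $\lfloor n/2\rfloor$ strictly right of $c_{t+1}$ (this is where the $4n/5$ slack is used---it dominates whatever the first voters were already committed to). Then the median is $c_t$ or $c_{t+1}$ according to where $v'$'s peak lands, and the adversary picks whichever defeats the algorithm's output. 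The threshold, in other words, is chosen \emph{a posteriori} from the residual uncertainty of the first under-queried voter, not maintained online. This immediately gives $\lfloor n/5\rfloor\cdot\Omega(\log m)=\Omega(n\log m)$ without any potential function.
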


\begin{proof}
 Let a profile \PP be single peaked with respect to the ordering of the candidates $\suc = c_1\succ c_2\succ \cdots \succ c_m$. The oracle maintains two indices $\el$ and $r$ for every voter such that any candidate from $\{c_\el, c_{\el+1}, \ldots, c_r\}$ can be placed at the first position of the preference of the voter $v$ and still be consistent with all the answers provided by the oracle for $v$ till now and single peaked with respect to \suc. The algorithm initializes $\el$ to one and $r$ to $m$ for every voter. The oracle answers any query in such a way that maximizes the new value of $r-\el$. More specifically, suppose the oracle receives a query to compare candidates $c_i$ and $c_j$ with $i< j$ for a voter $v$. If the ordering between $c_i$ and $c_j$ follows, by applying transitivity, from the answers to the queries that have already been made so far for this voter, then the oracle answers accordingly. Otherwise the oracle answers as follows. If $i < \el$, then the oracle answers that $c_j$ is preferred over $c_i$; else if $j > r$, then the oracle answers that $c_i$ is preferred over $c_j$. Otherwise (that is when $\el \le i < j\le r$), if $j-\el > r-i$, then the oracle answers that $c_i$ is preferred over $c_j$ and changes $r$ to $j$; if $j-\el \le r-i$, then the oracle answers that $c_j$ is preferred over $c_i$ and changes $\el$ to $i$. Hence whenever the oracle answers a query for a voter $v$, the value of $r-\el$ for that voter $v$ decreases by a factor of at most two. Suppose the election instance has an odd number of voters. Let \VV be the set of voters. Now we claim that the first $\lfloor\nfrac{n}{5}\rfloor$ voters must be queried $(\log m - 1)$ times each. Suppose not, then consider the first voter $v^\pr$ that is queried less than $(\log m - 1)$ times. Then there exist at least two candidates $c_t$ and $c_{t+1}$ each of which can be placed at the first position of the vote $v^\pr$. The oracle fixes the candidates at the first positions of all the votes that have not been queried till $v^\pr$ is queried (and there are at least $\lceil\nfrac{4n}{5}\rceil$ such votes) in such a way that $\lfloor\nfrac{n}{2}\rfloor$ voters in $\VV\setminus\{v^\pr\}$ places some candidate in the left of $c_t$ at the first positions and $\lfloor\nfrac{n}{2}\rfloor$ voters in $\VV\setminus\{v^\pr\}$ places some candidate in the right of $c_{t+1}$. If the algorithm outputs $c_t$ as the Condorcet winner, then the oracle makes $c_{t+1}$ the (unique) Condorcet winner by placing $c_{t+1}$ at the top position of $v^\pr$, because $c_{t+1}$ is the unique median in this case. If the algorithm does not output $c_t$ as the Condorcet winner, then the oracle makes $c_t$ the (unique) Condorcet winner by placing $c_t$ at the top position of $v^\pr$, because $c_t$ is the unique median in this case. Hence the algorithm fails to output correctly in both the cases thereby contradicting the correctness of the algorithm.
\end{proof}

\longversion{
From \Cref{thm:cw_gen_ub,thm:con_sp_ub} we have the following result for any arbitrary tree.

\begin{theorem}\label{thm:cw_pc_ub}
 Let \TT be a tree with path cover number at most $k$. Then there is an algorithm for \CW for profiles which are single peaked on \TT with query complexity $\BigO(nk\log m)$.
\end{theorem}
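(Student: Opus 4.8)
The plan is to mimic the divide-and-conquer strategy of \Cref{thm:pcover_elicit_ub}, but replace full elicitation on each path by the cheap weak-Condorcet-winner routine of \Cref{thm:con_sp_ub}, and stitch the local answers together with a knockout tournament rather than a merge. First I would reduce to the case of an odd number of voters, exactly as in \Cref{thm:cw_gen_ub}: if $n$ is even, set aside one voter and work with the profile $\PP_{-1}$ on the remaining $n-1$ voters. This profile is still single peaked on \TT, and since $n-1$ is odd it admits a \emph{unique} Condorcet winner $c^\star$ (no pairwise ties are possible, and a weak Condorcet winner is guaranteed to exist); moreover, as argued in \Cref{thm:cw_gen_ub}, any Condorcet winner of $\PP_{-1}$ is a weak Condorcet winner of \PP because all pairwise margins in \PP are even. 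So it suffices to find $c^\star$ using $\BigO(nk\log m)$ queries, where henceforth I write \PP for this odd-voter profile and $n$ for its number of voters.

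Next I would invoke the path cover: partition \TT into $t \le k$ vertex-disjoint paths $\PP_1 = (\CC_1,\EE_1), \ldots, \PP_t = (\CC_t,\EE_t)$. For each $i$, the restricted profile $\PP(\CC_i)$ is single peaked with respect to the order on $\CC_i$ induced by $\PP_i$ --- immediate from the definition of single-peakedness on a tree, since $\PP_i$ is the unique path in \TT between its own endpoints. As $\PP(\CC_i)$ has an odd number of voters, the algorithm of \Cref{thm:con_sp_ub}, which returns a median candidate, in fact returns the \emph{unique} Condorcet winner $c_i$ of $\PP(\CC_i)$, using $\BigO(n\log|\CC_i|) = \BigO(n\log m)$ queries; over all $t \le k$ paths this is $\BigO(nk\log m)$ queries. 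The key observation is that $c^\star$ lies in some part $\CC_j$, and restricting the profile to $\CC_j$ changes no pairwise comparison, so $c^\star$ beats every candidate of $\CC_j$ in $\PP(\CC_j)$; hence $c^\star = c_j \in \{c_1,\ldots,c_t\}$. Finally I would run a knockout tournament on $\{c_1,\ldots,c_t\}$ in \PP, each pairwise bout costing $\BigO(n)$ queries, for a total of $\BigO(nk)$ queries; since $c^\star$ beats every other candidate in this set it is never eliminated, so the unique survivor is $c^\star$. Adding up gives $\BigO(nk\log m) + \BigO(nk) = \BigO(nk\log m)$.

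The main point to get right --- more a matter of care than of difficulty --- is the chain of reductions licensing ``solve locally, then combine'': that passing to $n-1$ voters preserves single-peakedness on \TT and turns a weak Condorcet winner of $\PP_{-1}$ into one of \PP; that an odd voter count upgrades the median candidate of \Cref{thm:con_sp_ub} from a weak Condorcet winner to the genuine (unique) Condorcet winner; and that the global Condorcet winner restricts to the Condorcet winner of whichever path contains it, so no candidate of interest is lost by processing paths independently. Once these are in place, the tournament step and the query accounting are routine.
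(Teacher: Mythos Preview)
Your proposal is correct and follows essentially the same approach as the paper: reduce to an odd number of voters, partition into at most $k$ paths, apply \Cref{thm:con_sp_ub} on each path to get local Condorcet winners, and then run a knockout among these $t\le k$ survivors. If anything, your argument is slightly more careful than the paper's, since you explicitly justify that the global Condorcet winner $c^\star$ appears among the $c_i$'s and hence survives the tournament, whereas the paper simply invokes \Cref{thm:cw_gen_ub} on the set $\{w_i\}$ and outputs the result without spelling this out.
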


\begin{proof}
 Let \PP be the input profile and $\QQ_i = (\XX_i, \EE_i) ~i\in[t]$ be $t(\le k)$ disjoint paths that cover the tree \TT. Here again, if the number of voters is even, then we remove any arbitrary voter and the algorithm outputs the Condorcet winner of the rest of the votes. The correctness of this step follows from the proof of \Cref{thm:cw_gen_ub}. Hence we assume, without loss of generality, that we have an odd number of voters. The algorithm proceeds in two stages. In the first stage, we find the Condorcet winner $w_i$ of the profile $\PP(\XX_i)$ for every $i\in[t]$ using \Cref{thm:con_sp_ub}. The query complexity of this stage is $\BigO(n\sum_{i\in[t]} \log|\XX_i|) = \BigO(nt\log (\nfrac{m}{t}))$. In the second stage, we find the Condorcet winner $w$ of the profile $\PP(\{w_i : i\in[t]\})$ using \Cref{thm:cw_gen_ub} and output $w$. The query complexity of the second stage is $\BigO(nt\log t)$. Hence the overall query complexity of the algorithm is $\BigO(nt\log(\nfrac{m}{t})) + \BigO(nt\log t) = \BigO(nk\log m)$.
\end{proof}
}


\longversion{
\section{Conclusions and Future Work}\label{sec:com}

We show algorithms for preference elicitation for profiles which are single peaked on trees. Moreover, we prove that the query complexity of our algorithms are optimal up to constant factors. We also show that we need to query fewer number of times than preference elicitation if we only want to find any weak Condorcet winner.

In this work we do not assume any partial information about the preferences. However, in many scenarios, we may have some knowledge about the profile. An interesting future direction of research is to study how having some partial information helps us reduce query complexity of preference elicitation.
}

\longversion{
\subsubsection*{Acknowledgement} Palash Dey wishes to gratefully acknowledge support from Google India for providing him with a special fellowship for carrying out his doctoral work.}

\longversion{\bibliographystyle{alpha}}
\shortversion{\bibliographystyle{named}}

\bibliography{Elicitation}

\newcommand{\etalchar}[1]{$^{#1}$}
\begin{thebibliography}{MCWG{\etalchar{+}}95}

\bibitem[Arr50]{arrow1950difficulty}
Kenneth~J Arrow.
\newblock A difficulty in the concept of social welfare.
\newblock {\em The Journal of Political Economy}, pages 328--346, 1950.

\bibitem[BBHH15]{brandt2015bypassing}
Felix Brandt, Markus Brill, Edith Hemaspaandra, and Lane~A Hemaspaandra.
\newblock Bypassing combinatorial protections: Polynomial-time algorithms for
  single-peaked electorates.
\newblock {\em Journal of Artificial Intelligence Research (JAIR)}, pages
  439--496, 2015.

\bibitem[BCE{\etalchar{+}}15]{brandt2015handbook}
Felix Brandt, Vincent Conitzer, Ulle Endriss, J{\'e}r{\^o}me Lang, and Ariel
  Procaccia.
\newblock Handbook of computational social choice, 2015.

\bibitem[Bla48]{black1948rationale}
Duncan Black.
\newblock On the rationale of group decision-making.
\newblock {\em The Journal of Political Economy}, pages 23--34, 1948.

\bibitem[Con09]{Conitzer09}
Vincent Conitzer.
\newblock Eliciting single-peaked preferences using comparison queries.
\newblock {\em Journal of Artificial Intelligence Research (JAIR)},
  35:161--191, 2009.

\bibitem[Cor09]{cormen2009introduction}
Thomas~H Cormen.
\newblock {\em Introduction to algorithms}.
\newblock MIT press, 2009.

\bibitem[CS02]{conitzer2002vote}
Vincent Conitzer and Tuomas Sandholm.
\newblock Vote elicitation: Complexity and strategy-proofness.
\newblock In {\em Eighteenth National Conference on Artificial Intelligence
  (AAAI)}, pages 392--397, 2002.

\bibitem[CS05]{conitzer2005communication}
Vincent Conitzer and Tuomas Sandholm.
\newblock Communication complexity of common voting rules.
\newblock In {\em Proceedings of the 6th ACM conference on Electronic Commerce
  (EC)}, pages 78--87. ACM, 2005.

\bibitem[Dem82]{demange1982single}
Gabrielle Demange.
\newblock Single-peaked orders on a tree.
\newblock {\em Mathematical Social Sciences}, 3(4):389--396, 1982.

\bibitem[DL13]{ding2013voting}
Ning Ding and Fangzhen Lin.
\newblock Voting with partial information: what questions to ask?
\newblock In {\em Proceedings of the 12th International Conference on
  Autonomous Agents and Multi-agent Systems (AAMAS)}, pages 1237--1238.
  International Foundation for Autonomous Agents and Multiagent Systems, 2013.

\bibitem[DM16]{deycross}
Palash Dey and Neeldhara Misra.
\newblock Preference elicitation for single crossing domain.
\newblock In {\em {IJCAI} 2016, Proceedings of the 25nd International Joint
  Conference on Artificial Intelligence (IJCAI)}, 2016.

\bibitem[Gib73]{gibbard1973manipulation}
Allan Gibbard.
\newblock Manipulation of voting schemes: a general result.
\newblock {\em Econometrica: Journal of the Econometric Society}, pages
  587--601, 1973.

\bibitem[Hei93]{heinrich1993path}
Katherine Heinrich.
\newblock Path decomposition.
\newblock {\em Le matematiche}, 47(2):241--258, 1993.

\bibitem[HM97]{hinich1997analytical}
Mel vin~J Hinich and Michael~C Munger.
\newblock {\em Analytical politics}.
\newblock Cambridge University Press, 1997.

\bibitem[HUA83]{hopcroft1983data}
John~E Hopcroft, Jeffrey~David Ullman, and Alfred~Vaino Aho.
\newblock {\em Data structures and algorithms}, volume 175.
\newblock Addison-Wesley Boston, MA, USA:, 1983.

\bibitem[LB11a]{LuB11a}
Tyler Lu and Craig Boutilier.
\newblock Robust approximation and incremental elicitation in voting protocols.
\newblock In {\em {IJCAI} 2011, Proceedings of the 22nd International Joint
  Conference on Artificial Intelligence (IJCAI)}, pages 287--293, 2011.

\bibitem[LB11b]{lu2011vote}
Tyler Lu and Craig Boutilier.
\newblock Vote elicitation with probabilistic preference models: Empirical
  estimation and cost tradeoffs.
\newblock In {\em Algorithmic Decision Theory}, pages 135--149. Springer, 2011.

\bibitem[MCWG{\etalchar{+}}95]{mas1995microeconomic}
Andreu Mas-Colell, Michael~Dennis Whinston, Jerry~R Green, et~al.
\newblock {\em Microeconomic theory}, volume~1.
\newblock Oxford university press New York, 1995.

\bibitem[Mou91]{moulin1991axioms}
Hervi Moulin.
\newblock {\em Axioms of cooperative decision making}.
\newblock Number~15. Cambridge University Press, 1991.

\bibitem[PE16]{peters2016preferences}
Dominik Peters and Edith Elkind.
\newblock Preferences single-peaked on nice trees.
\newblock 2016.

\bibitem[Sat75]{satterthwaite1975strategy}
Mark~Allen Satterthwaite.
\newblock Strategy-proofness and arrow's conditions: Existence and
  correspondence theorems for voting procedures and social welfare functions.
\newblock {\em Journal of Economic Theory}, 10(2):187--217, 1975.

\bibitem[Tri89]{trick1989recognizing}
Michael~A Trick.
\newblock Recognizing single-peaked preferences on a tree.
\newblock {\em Mathematical Social Sciences}, 17(3):329--334, 1989.

\bibitem[YCE13]{yu2013multiwinner}
Lan Yu, Hau Chan, and Edith Elkind.
\newblock Multiwinner elections under preferences that are single-peaked on a
  tree.
\newblock In {\em Proceedings of the Twenty-Third International Joint
  Conference on Artificial Intelligence (IJCAI)}, pages 425--431. AAAI Press,
  2013.

\end{thebibliography}


\end{document}